\newtheorem{theorem}{Theorem}
\newtheorem{proposition}[theorem]{Proposition}
\newcommand{\be}{\begin{equation}}
\newcommand{\ee}{\end{equation}}
\newcommand{\W}{{\it Wolbachia}}
\newtheorem{thm}{\bf Theorem}[section]
\newcommand{\Wolb}{{\it Wolbachia }}
\begin{document}

\title[Control of West Nile virus by {\em Wolbachia}]{Modelling {\em Wolbachia} infection in a sex-structured mosquito population carrying West Nile virus}

\author[J\'{o}zsef Z. Farkas]{J\'{o}zsef Z. Farkas}
\address{J\'{o}zsef Z. Farkas, Division of Computing Science and Mathematics, University of Stirling, Stirling, FK9 4LA, United Kingdom }
\email{jozsef.farkas@stir.ac.uk}

\author[Stephen A. Gourley]{Stephen A. Gourley}
\address{Stephen A. Gourley, Department of Mathematics, University of Surrey, Guildford, Surrey, GU2 7XH, United Kingdom}
\email{s.gourley@surrey.ac.uk}

\author[Rongsong Liu]{Rongsong Liu}
\address{Rongsong Liu, Department of Mathematics and Department of Zoology and Physiology, University of Wyoming, Laramie, WY 82071, USA}
\email{Rongsong.Liu@uwyo.edu}

\author[Abdul-Aziz Yakubu]{Abdul-Aziz Yakubu}
\address{Abdul-Aziz Yakubu,  Department of Mathematics, Howard University, Washington, DC 20059, USA}
\email{ayakubu@howard.edu}

\subjclass{92D30, 34D20, 34C11}
\keywords{{\it Wolbachia}, sex-structure, West Nile virus, epidemic, stability.}
\date{\today}

\begin{abstract}
\Wolb is possibly the most studied reproductive parasite of arthropod species. It appears to be a promising candidate for 
biocontrol of some mosquito borne diseases. We begin by developing a sex-structured model for a \Wolb infected mosquito population. 
Our model incorporates the key effects of \Wolb infection including cytoplasmic incompatibility and male killing. 
We also allow the possibility of reduced reproductive output, incomplete maternal transmission, and different mortality rates for uninfected/infected male/female individuals. We study the existence and local stability of equilibria, including the biologically relevant and interesting boundary equilibria. For some biologically relevant parameter regimes there may be multiple coexistence steady states including, very importantly, a coexistence steady state in which \Wolb infected individuals dominate. We also extend the model to incorporate West Nile virus (WNv) dynamics, using an SEI modelling approach. Recent evidence suggests that a particular strain of \Wolb infection significantly reduces WNv replication in {\it Aedes aegypti}. We model this via increased time spent in the WNv-exposed compartment for \Wolb infected female mosquitoes. 
A basic reproduction number $R_0$ is computed for the WNv infection. Our results suggest that, if the mosquito population consists mainly of \Wolb infected individuals, WNv eradication is likely if WNv replication in \Wolb infected individuals is sufficiently reduced.
\end{abstract}
\maketitle

\section{Introduction}

{\em Wolbachia} is a maternally transmitted intracellular symbiont, 
and it is the most common reproductive parasite infecting a significant 
proportion of insect species, see e.g. \cite{ONeill,Werren1997}. 
{\em Wolbachia} typically inhibits testes and ovaries of its host, 
and it is also present in its host's eggs. It interferes with its host's 
reproductive mechanism in a remarkable fashion. This allows \Wolb to successfully establish itself in 
a number of arthropod species. Well-known effects of {\em Wolbachia} infections include cytoplasmic incompatibility (CI for short) and  
feminization of genetic males also known as male killing (MK for short), 
see e.g. \cite{Caspari1959,Hoffmann1997,Telschow2005,Telschow2005b}. 
Another important well-known effect of {\em Wolbachia} infections is the inducement of parthenogenesis, see e.g. \cite{Engelstadter2004,Stouthamer1997}. All of these contribute to the fact that the mathematical modelling of \Wolb infection dynamics is both interesting and challenging. 

In recent decades a substantial number of mathematical modelling approaches have been 
applied to model different types of \Wolb infections in a variety of arthropod species.  
Perhaps most frequently researchers have been focusing on the development of mathematical models for 
\Wolb infections in mosquito species. Many of the earlier models took the form of discrete time matrix models, written 
for population frequencies, see e.g.  \cite{Turelli1994,Vautrin2007}, and the references therein. 
Using frequency-type models a number of researchers investigated the possibility of coexistence of multiple \Wolb strains, 
each of which exhibits different types of the reproductive mechanisms mentioned earlier, see e.g. \cite{Engelstadter2004,FHin2,Keeling2003,Vautrin2007}. 
Among others, \Wolb strains have been investigated as a potential biological control tool to eradicate mosquito borne diseases. 
Originally the focus has been on \Wolb strains that induce life-shortening of their hosts. This is because for many vector borne diseases 
only older mosquitoes are of interest from the point of view of disease transmission. 
Therefore the use of (discrete) age-structured population models has become increasingly prevalent, see e.g. \cite{Rasgon2004} and the references therein. Fairly recently, in \cite{McMeniman2009} the results of laboratory experiments were reported envisaging a successful introduction 
of a life-shortening \Wolb strain in the mosquito species {\em Aedes aegypti}. 
In \cite{McMeniman2009} three key factors, namely, strong CI, low fitness cost and high maternal transmission rate, were identified as drivers of a successful introduction of the new \Wolb strain into an {\em Aedes} population. 
To this end researchers have developed and analysed continuous age-structured population models for \Wolb infection dynamics, 
which take the form of partial differential equations, see \cite{FHin2}; which can often be recast as delay equations, see e.g. \cite{Hancock,Hancock2}.

In recent years there have been substantial modelling
  efforts to theoretically investigate the potential
  of biological control tools for limiting the impact of
  mosquito borne diseases. It is now widely recognised that biological
  control represents a viable alternative to established methods
  such as the use of insecticides and bed nets. Among others, the sterile insect
  technique has been investigated in the recent papers
  \cite{Dumont2012,Li2011,Li2015}. More recently, it was reported
that particular strains of \Wolb (completely or almost completely)
block dengue virus replication   
inside the mosquito hosts, see for example \cite{Blagrove,Hoffmann,Walker2011}. 
To this end Hughes and Britton~\cite{Hughes_2013} developed a
mathematical model for \Wolb infection as a potential control tool for
dengue fever.  
Their work suggests that \Wolb may be effective as such a control measure in areas where the basic reproduction number $R_0$ is not too large.  These recent results underpin the possibility that \Wolb may be a promising candidate for biocontrol of mosquito borne diseases, in general. 
Besides dengue, West Nile virus (WNv) is another well-known mosquito borne disease of current interest. 
WNv infection cycles between mosquitoes (especially \textit{Culex} species) and a number of species, particularly birds. Some infected
birds develop high levels of virus in their bloodstream and mosquitoes can become infected by biting these infectious birds. 
After about a week, infected mosquitoes can transmit the virus to susceptible birds.  
Mosquitoes infected with West Nile virus also bite and infect people, horses, and other mammals. 
However, humans, horses, and other mammals are `dead end' hosts. This virus was first isolated in
the West Nile region of Uganda, and since then has spread rapidly, for
example in North America during the past 12 years. Since there is no
vaccine available, the emphasis has been mainly on controlling the
vector mosquito species. Some recent experiments, see
\cite{Hussain2013}, have confirmed that replication of the virus in
orally fed mosquitoes was largely inhibited in the wMelPop strain of
\textit{Wolbachia}. Interestingly, in a recent paper,
Dodson~et~al~\cite{Dodson2014} demonstrated in laboratory experiments
  that the {\it w}AlbB {\it Wolbachia} strain in fact enhances WNv
  infection rates in the mosquito species {\it Culex
    transalis}. However, in \cite{Dodson2014} the {\it Wolbachia} was
  not a stable maternally inherited infection, but rather they
  infected transiently somatic mosquito tissues, and hence the {\it
    w}AlbB infection did not induce significant immune response in the
  mosquitoes. This is probably key to their findings. Here we will
  focus on modelling a maternally inherited {\it Wolbachia} infection
  in a population model, which hypothesizes a large number of
  successive generations. Nevertheless, the findings in
  \cite{Dodson2014} underpin the importance of {\it Wolbachia}
  research in general and highlight the importance of contrasting the findings
  of new theoretical, laboratory and field investigations.

In this work we introduce sex-structured models for \Wolb infection dynamics in a mosquito population. 
This will allow us to incorporate and study the well-known effects of CI and MK of particular \Wolb infections, simultaneously. 
First we will treat a model which only involves the mosquito population itself. Then we will 
use this model as a basis for a much more complex scenario incorporating WNv dynamics in a \Wolb infected mosquito population. 
The full WNv model will naturally include the bird population, too.

\section{Model for a \Wolb infected mosquito population without WNv}

\subsection{Model derivation}
We start by introducing a model for a \Wolb infection in a sex-structured mosquito population, incorporating sex-structure using 
a well established approach originally due to Kendall
\cite{Kendall}. More recent papers of
  Hadeler~\cite{Hadeler2012} and Hadeler~et~al~\cite{Hadeler1988} derive
  and discuss sex-structured pair formation models in depth.
We only model (explicitly) the adult population of mosquitoes. Our model allows us to take into account the well-known effects of 
cytoplasmic incompatibility (CI), incomplete maternal transmission, fertility cost of the \Wolb infection to reproductive output, 
and male killing (MK), at the same time. We note that it was shown in \cite{Engelstadter2004} that a stable
coexistence of MK and CI inducing {\it Wolbachia} strains is possible, in principle. 
Introduction of male killing {\it Wolbachia} strains in vector populations may have a significant effect on the disease
dynamics, as typically only female mosquitoes are transmitting the disease. Also note that according to \cite{Walker2011}, those \Wolb strains which cause greater disruption, as in the case of dengue transmission, confer greater fitness costs to the mosquitoes. 
This may well be the case for West Nile virus, hence we account for the reduced reproductive output in our model.  

We deduce our starting model from basic principles. In particular, first we deduce mating rules arising at the individual level. 
Starting with the adult population of size $N$, we construct a random mating graph. 
This is a bipartite graph, not necessarily complete, in which each vertex has degree at most one. The vertices represent male and female individuals and edges represent realized matings. Let us denote by $M,M_w,F,F_w$ the numbers of un/infected males and females, respectively. For every adult mating pair, offspring is created according to the following rules. Below, $m,f$ and $m_w,f_w$ denote uninfected/infected male/female individual, respectively. The parameter $\beta$ models the reduced reproductive output of \Wolb infected females, $\tau$ measures maternal transmission in the sense that it is the probability that a \Wolb infected mother passes on the infection to its offspring, $q$ measures CI in the sense that when a \Wolb infected male mates with an uninfected female, $q$ is the probability that there is no viable offspring. Finally, $\gamma$ measures MK in the sense that it is the probability that a \Wolb infected male larva dies during its development. A complete list of parameter values will be given later on. With this notation the mating rules are described below.
\begin{enumerate}
\item $m \times f$ : create one pair of the same type  $(m,f)$. 
\item $m \times f_w$ : with probability $\beta$, create no
  offspring. This reflects the fecundity reduction due to the
  Wolbachia infection. In the complementary case, with probability
  $(1-\beta)\tau(1-\gamma)$, create  a new pair $(m_w, f_w)$, at the
  same time with probability $(1-\beta)\tau\gamma$ create $(0,f_w)$,
  i.e. a female only brood. This accounts for male killing (MK). With probability $(1-\beta)(1-\tau)$ create a new pair $(m,f)$.
\item $m_w\times f_w$ : same as above.
\item $m_w\times f$: with probability $q$, create no offspring. This is the effect of cytoplasmic incompatibility (CI). With probability $1-q$, create a new pair $(m, f)$.
\end{enumerate}
Notice that, in contrast to \cite{FHin2}, the sex ratio at birth will not be $1:1$, it is distorted due to male killing. Also we allow
different  mortality rates for males and females, in general. Therefore, even in the case when there is no male killing, the
sex ratio would be distorted, in general. Also, we assume that any offspring resulting from CI crossing is uninfected.  

We apply the mating rules described above to construct the birth function in our model. If the population sizes in the four compartments are denoted by $M,M_w,F,F_w$, respectively, then the total number of possible matings is $(M+M_w)(F+F_w)$. 
The total number of matings for example between uninfected males and infected females is $MF_w$. Hence the probability that a given mating of type $m\times f_w$ takes place is $\frac{MF_w}{(M+M_w)(F+F_w)}$. To compute the total number of matings per unit time we follow the harmonic mean birth function approach from \cite{Keyfitz_1972}. Accordingly, the total number of matings is proportional to
\begin{align}
& M\left(\frac{F+F_w}{M+M_w+F+F_w}\right)+M_w\left(\frac{F+F_w}{M+M_w+F+F_w}\right) \nonumber \\
& +F\left(\frac{M+M_w}{M+M_w+F+F_w}\right)+F_w\left(\frac{M+M_w}{M+M_w+F+F_w}\right)=2\frac{(M+M_w)(F+F_w)}{M+M_w+F+F_w}.
\end{align}
Hence the birth rate of offspring arising for example from the mating between an uninfected male and an infected female 
is proportional to $\frac{MF_w}{M+M_w+F+F_w}$. We also naturally
assume that there is competition between female individuals  
for finding an appropriate water reservoir to lay eggs. This is taken
into account via a function $\lambda(F_{total})$  which we assume (at least in the first instance)
to be a monotonically decreasing function of the total number of
females $F_{total}$, to allow for this competition for nesting
places. Though $\lambda(F_{total})$ is decreasing, it may approach a positive
limit as $F_{total}\rightarrow\infty$. This is to allow for the fact
that gravid females that cannot find a place to lay their eggs may
destroy eggs previously laid by others, and lay theirs instead. Thus
the overall egg-laying rate should approach a positive limit as
$F_{total}\rightarrow\infty$, and therefore we assume that $\lambda(F_{total})$
is a decreasing function such that $\lambda(\infty)>0$. Based on
the individual mating rules explained earlier, our model reads as
follows.   
\begin{align}
M^{\prime}(t) & = - \mu_m M + \frac{\lambda(F_{total})}{N} (MF + (1-\beta)(1-\tau)(MF_w + M_wF_w) + (1-q)M_wF),\label{eqM} \\
F^{\prime}(t) & = - \mu_f F + \frac{\lambda(F_{total})}{N} (MF + (1-\beta)(1-\tau)(MF_w + M_wF_w) + (1-q)M_wF), \label{eqF} \\
M^{\prime}_w(t) &= - \mu_{mw} M_w + \frac{\lambda(F_{total})}{N} (1-\beta)\tau(1-\gamma)(MF_w + M_wF_w), \label{eqMw} \\
F^{\prime}_w(t) &= - \mu_{fw} F_w + \frac{\lambda(F_{total})}{N} (1-\beta)\tau(MF_w + M_wF_w). \label{eqFw}
\end{align}
A complete list of the variables, parameters and coefficient functions appearing in model \eqref{eqM}-\eqref{eqFw} is given below. 
\begin{itemize}
\item $M:$ number of uninfected male mosquitoes.
\item $F:$ number of uninfected female mosquitoes.
\item $M_w:$ number of \Wolb infected male mosquitoes.
\item $F_w$: number of \Wolb infected female mosquitoes.
\item $M_{total} = M+M_w$, total number of male mosquitoes.
\item $F_{total} = F + F_w$, total number of female mosquitoes.
\item $N=M_{total}+F_{total}$, total number of mosquitoes.
\item $\beta$: reduction in reproductive output of \Wolb infected females.
\item $\tau$: maternal transmission probability for \Wolb infection.
\item $q$: probability of cytoplasmic incompatibility (CI).
\item $\gamma$: probability of male killing (MK) induced by \Wolb infection.
\item $\lambda(F_{total})$: average egg laying rate, which depends on the total number of female mosquitoes.
\item $\mu_m$: per-capita mortality rate of uninfected male mosquitoes.
\item $\mu_f$: per-capita mortality rate of uninfected female mosquitoes.
\item $\mu_{mw}$: per-capita mortality rate of \Wolb infected male mosquitoes.
\item $\mu_{fw}$: per-capita mortality rate of \Wolb infected female mosquitoes.
\end{itemize}
Model \eqref{eqM}-\eqref{eqFw} is our starting point for a study of
the \Wolb infection dynamics in a sex-structured mosquito population.  
Later, we will expand this model by introducing WNv infection. 

\subsection{Positivity and boundedness}
First we begin by establishing positivity and boundedness of solutions of model \eqref{eqM}-\eqref{eqFw}. 
\begin{proposition}
Assume that $\lambda$ is a monotone decreasing function such that 
\begin{equation}
\displaystyle\lim_{F_{total}\to\infty}\lambda(F_{total})=\lambda_{min}>0, \quad \lambda(0)>\min\{\mu_f,\mu_{fw}\}, \quad \lambda_{min}<\min\{\mu_f,\mu_{fw}\}, \label{lambda-cond}
\end{equation}
hold. Then, the variables $(M,F,M_w,F_w)$ satisfying equations \eqref{eqM}-\eqref{eqFw} remain non-negative if they are non-negative initially, and they remain bounded for all times. 
\end{proposition}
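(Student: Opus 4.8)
The plan is to handle non-negativity and boundedness separately, exploiting throughout the fact that every birth term is a product of one ``male-type'' and one ``female-type'' variable, so that all reproduction terms are controlled by $M_{total}F_{total}$. For non-negativity I would verify the standard quasi-positivity (tangency) condition: on each face of the non-negative orthant the corresponding component of the field points inward. Setting each variable to zero in turn while keeping the others non-negative, one checks that $M'|_{M=0}$, $F'|_{F=0}$, $M_w'|_{M_w=0}$ and $F_w'|_{F_w=0}$ are all $\geq 0$, since after deleting the vanishing products every surviving reproduction term carries a non-negative coefficient (all of $\beta,\tau,q,\gamma$ lie in $[0,1]$). In fact, using $MF_w+M_wF_w=M_{total}F_w$, equation \eqref{eqFw} reads $F_w'=F_w\big[-\mu_{fw}+\frac{\lambda}{N}(1-\beta)\tau M_{total}\big]$, which is linear in $F_w$ and thus keeps $F_w\geq 0$ outright. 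Together with local existence — the right-hand sides are locally Lipschitz away from the origin, and the quadratic-over-linear terms extend continuously by $0$ at the origin, which is itself an equilibrium — this gives forward invariance of the non-negative orthant.

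For boundedness I would bootstrap through the females first, since the hypotheses \eqref{lambda-cond} involve $\mu_f,\mu_{fw}$. Adding \eqref{eqF} and \eqref{eqFw} and using the cancellation $(1-\beta)(1-\tau)+(1-\beta)\tau=(1-\beta)$, the combined female birth term is $MF+(1-q)M_wF+(1-\beta)M_{total}F_w\leq M_{total}F_{total}$. Hence, using $M_{total}/N\leq 1$,
\[
F_{total}'\leq -\min\{\mu_f,\mu_{fw}\}F_{total}+\frac{\lambda(F_{total})}{N}M_{total}F_{total}\leq F_{total}\big[\lambda(F_{total})-\min\{\mu_f,\mu_{fw}\}\big].
\]
Because $\lambda$ decreases to $\lambda_{min}<\min\{\mu_f,\mu_{fw}\}$, there is a threshold $\bar F$ beyond which $\lambda(F_{total})<\min\{\mu_f,\mu_{fw}\}$, so $F_{total}$ strictly decreases whenever $F_{total}>\bar F$; a scalar comparison argument then bounds $F_{total}$ by $\max\{F_{total}(0),\bar F\}=:K_F$.

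The decisive, and slightly counter-intuitive, step is the male compartment, because \eqref{lambda-cond} imposes no relation between $\lambda$ and the male mortalities, so the damping that tamed the females is absent. Adding \eqref{eqM} and \eqref{eqMw}, the coefficient on $M_{total}F_w$ is $(1-\beta)(1-\tau)+(1-\beta)\tau(1-\gamma)=(1-\beta)(1-\tau\gamma)\leq 1$, so the combined male birth term is again $\leq M_{total}F_{total}$. The key observation is that, once $F_{total}$ is controlled, this birth rate is not merely proportional to $M_{total}$ but is an honest constant bound:
\[
\frac{\lambda(F_{total})}{N}M_{total}F_{total}=\lambda(F_{total})\,\frac{M_{total}}{N}\,F_{total}\leq \lambda(0)K_F,
\]
using $M_{total}/N\leq 1$, $\lambda\leq\lambda(0)$ and $F_{total}\leq K_F$. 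Therefore $M_{total}'\leq -\min\{\mu_m,\mu_{mw}\}M_{total}+\lambda(0)K_F$, and a linear comparison bounds $M_{total}$. Consequently $N=M_{total}+F_{total}$ is bounded, and since all four variables are non-negative with bounded sum, each is bounded and the solution extends to all $t\geq 0$. The main obstacle to watch is precisely this male estimate: one must resist bounding $\frac{\lambda}{N}M_{total}F_{total}$ by $\lambda(0)M_{total}$ (which would only yield a useless growth inequality) and instead retain the factor $M_{total}/N\leq 1$, leaving the already-controlled factor $F_{total}\leq K_F$ to produce a bounded forcing term.
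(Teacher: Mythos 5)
Your proposal is correct and follows essentially the same route as the paper: bound $F_{total}$ first via the differential inequality $F_{total}'\leq\big(\lambda(F_{total})-\min\{\mu_f,\mu_{fw}\}\big)F_{total}$ using $M_{total}/N\leq 1$, then feed the resulting bound on $\lambda(F_{total})F_{total}$ into the $M_{total}$ equation as a constant forcing term. The only difference is cosmetic: you verify quasi-positivity on the faces of the orthant explicitly, whereas the paper simply cites Smith's monograph for non-negativity.
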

\begin{proof}
First note that the solution variables remain non-negative for all time; this follows from results in \cite{Smith95}. 
Adding equations \eqref{eqF} and \eqref{eqFw}, and noticing that $\beta \in [0,1]$, $q \in [0,1]$,  we  have
\begin{align}
F_{total}^{\prime} \leq & \displaystyle  -\min\{\mu_f, \mu_{fw}\} F_{total} + \frac{\lambda(F_{total})}{M_{total} + F_{total}}M_{total}F_{total}\nonumber \\
\leq & \displaystyle -\min\{\mu_f, \mu_{fw}\} F_{total} + \frac{\lambda(F_{total})}{M_{total} }M_{total}F_{total}\nonumber \\
= & \left(-\min\{\mu_f, \mu_{fw}\} + \lambda(F_{total})\right) F_{total}.
\end{align}
Therefore,
$$
\limsup_{t\rightarrow \infty} F_{total}(t) \leq \bar{F}
$$
where $\bar{F}$ is such that $\lambda(\bar{F}) = \min\{\mu_f,\mu_{fw}\}$. Note $\bar{F}$ exists since we assumed $\lambda$ is monotone decreasing and satisfies \eqref{lambda-cond}.

Since $F_{total}$ remains bounded it follows that $M_{total}$ is bounded as well, because adding \eqref{eqM} and \eqref{eqMw} we have
\begin{align}
M_{total}^{\prime} \leq & \displaystyle  -\min\{\mu_m, \mu_{mw}\} M_{total} + \frac{\lambda(F_{total})}{M_{total} + F_{total}}M_{total}F_{total} \nonumber \\
\leq & \displaystyle -\min\{\mu_m, \mu_{mw}\} M_{total} + \lambda(F_{total})F_{total} \nonumber \\
 \leq & -\min\{\mu_m, \mu_{mw}\} M_{total} + B,\label{M-diff-ineq}
 \end{align}
where $B$ is an upper bound for $\lambda(F_{total}(t)) F_{total}(t)$, which exists since $F_{total}(t)$ is bounded and therefore so is $\lambda(F_{total}(t)) F_{total}(t)$. From the differential inequality \eqref{M-diff-ineq}, we can conclude that $M_{total}$ is bounded, too.
\end{proof}

\subsection{Boundary equilibria and their stability}
It is straightforward  to see that model \eqref{eqM}-\eqref{eqFw} has only one non-trivial \Wolb free boundary equilibrium $E^*=(M^*, F^*,0,0)$,  where $F^*$ satisfies
\begin{equation}\label{Fstar}
\lambda(F^*) = \mu_f +\mu_m,
\end{equation}
and 
\begin{equation}\label{Mstar}
M^* = \frac{\mu_f F^*}{\mu_m},
\end{equation}
 under the assumptions that $\lambda(0) > \mu_f + \mu_m$, $\lambda$ is
 a decreasing non-negative function, and
 $\lambda(F_{total})\rightarrow \lambda_{min}$
 (with $\lambda_{min}$ sufficiently small) as $F_{total} \rightarrow \infty$. 

Note that there is no \Wolb infected boundary equilibrium unless
$\tau=1$, a case that we shall treat separately later. 

\begin{thm}\label{linstabMF00}
Suppose that $\lambda$ is a monotone decreasing non-negative function
such that $\lambda(F_{total})\rightarrow \lambda_{min}$ as $F_{total} \rightarrow \infty$, with
$\lambda_{min}$ sufficiently small, $\lambda(0)>\mu_f + \mu_m$ and
\begin{equation}\label{A4}
\frac{\mu_f (1-\beta)\tau}{\mu_{fw}} <1.
\end{equation}
Then, the \Wolb free boundary equilibrium $E^*=(M^*, F^*,0,0)$ of
model \eqref{eqM}-\eqref{eqFw} is locally asymptotically stable. 
\end{thm}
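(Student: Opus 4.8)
The plan is to prove local asymptotic stability by linearising the system \eqref{eqM}--\eqref{eqFw} about $E^*=(M^*,F^*,0,0)$ and showing that the Jacobian $J$ has all four eigenvalues with negative real part. The key structural observation I would exploit is that in the equations \eqref{eqMw}--\eqref{eqFw} for the infected classes, every birth term carries the factor $(MF_w+M_wF_w)$, which is quadratic in the infected variables; hence its partial derivatives with respect to $M$ and $F$ vanish at $E^*$ (where $M_w=F_w=0$). Consequently, ordering the variables as $(M,F,M_w,F_w)$, the Jacobian is block lower-triangular,
\[
J=\begin{pmatrix} A & B \\ 0 & C \end{pmatrix},
\]
so that the spectrum of $J$ is the union of the spectra of the $2\times 2$ blocks $A$ (uninfected) and $C$ (infected), which I can treat separately.

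For the infected block $C$, I would note that the derivative of the $M_w$-birth term with respect to $M_w$ also vanishes at $E^*$ (it is proportional to $F_w$), so $C$ is itself upper-triangular with diagonal entries $-\mu_{mw}$ and $-\mu_{fw}+\frac{\lambda(F^*)}{M^*+F^*}(1-\beta)\tau M^*$. Using the equilibrium identities obtained by setting the right-hand sides of \eqref{eqM} and \eqref{eqF} to zero, namely $\frac{\lambda(F^*)}{M^*+F^*}M^*=\mu_f$ and $\frac{\lambda(F^*)}{M^*+F^*}F^*=\mu_m$ (which are equivalent to the defining relations \eqref{Fstar} and \eqref{Mstar}), the second diagonal entry collapses to $-\mu_{fw}+(1-\beta)\tau\mu_f$. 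This is negative precisely under hypothesis \eqref{A4}, while $-\mu_{mw}<0$ automatically; hence both eigenvalues of $C$ are negative.

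For the uninfected block $A$, I would compute its four entries at $E^*$, carefully retaining the term $\lambda'(F^*)$ that arises from differentiating $\lambda(F_{total})$ in the $F$-direction. Writing $b:=\lambda'(F^*)\,\frac{M^*F^*}{M^*+F^*}$ and applying the same equilibrium identities to simplify, I expect the trace of $A$ to reduce to $-\frac{2\mu_m\mu_f}{\mu_m+\mu_f}+b$ and the determinant to collapse, after cancellation, to the clean expression $-\mu_m b=-\mu_m\lambda'(F^*)\frac{M^*F^*}{M^*+F^*}$. Since $\lambda$ is decreasing we have $\lambda'(F^*)<0$, so $b<0$; the trace is therefore negative and the determinant positive, which for a real $2\times 2$ matrix guarantees that both eigenvalues have negative real part. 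Combining the two blocks shows that all four eigenvalues of $J$ lie in the open left half-plane, giving local asymptotic stability of $E^*$.

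The step I expect to be the main obstacle is the determinant computation for $A$: its entries are somewhat unwieldy, and it is not obvious a priori that the two $\frac{\mu_m^2\mu_f^2}{(\mu_m+\mu_f)^2}$-type contributions cancel to leave only the single term proportional to $\lambda'(F^*)$. This cancellation is exactly what makes the monotonicity of $\lambda$ the decisive ingredient for stability in the uninfected block, mirroring the way condition \eqref{A4} is decisive for the infected block, so the determinant must be evaluated exactly rather than merely bounded.
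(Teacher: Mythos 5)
Your proposal is correct and follows essentially the same route as the paper: the linearisation at $E^*$ decouples into the infected block, whose eigenvalues $-\mu_{mw}$ and $-\mu_{fw}+(1-\beta)\tau\mu_f$ are negative exactly under \eqref{A4}, and the uninfected block, whose trace and determinant (your $b=\lambda'(F^*)M^*F^*/(M^*+F^*)$ computation, including the cancellation leaving $\det A=-\mu_m b>0$) match the paper's characteristic-equation argument. The only cosmetic difference is that the paper phrases the infected block as a dynamical decay argument for $(M_w,F_w)$ rather than as eigenvalues of a triangular matrix.
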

\begin{proof} 
Linearisation of system \eqref{eqM}-\eqref{eqFw} at the equilibrium $E^*$ yields the following partially decoupled systems. The first system below  is just the linearisation of equations \eqref{eqMw}-\eqref{eqFw} at the steady state, which we shall use to show that $(M_w(t),F_w(t)) \rightarrow (0,0)$ as $t \rightarrow \infty$. System \eqref{lin_2} is just system \eqref{eqM}-\eqref{eqF} in the case $M_w = F_w =0$.
\begin{equation}\label{lin_1}
\left\{\begin{array}{lcl}
M_w^{\prime} & = &\displaystyle -\mu_{mw} M_w + \frac{\lambda(F^*)}{M^* + F^*}  (1-\beta)\tau (1-\gamma)M^*F_w, \\[1ex]
F_w^{\prime} & = &\displaystyle -\mu_{fw} F_w + \frac{\lambda(F^*)}{M^* + F^*}  (1-\beta)\tau M^*F_w, \\[1ex]
\end{array}\right.
\end{equation}

\begin{equation}\label{lin_2}
\left\{\begin{array}{lcl}
M^{\prime} & = &\displaystyle -\mu_{m} M + \frac{\lambda(F)}{M + F}MF, \\
F^{\prime} & = &\displaystyle -\mu_{f} F + \frac{\lambda(F)}{M + F}MF. \\
\end{array}\right.
\end{equation}
From the second equation of  \eqref{lin_1}, it is clear that if 
\begin{equation}
\frac{\lambda(F^*)}{M^* + F^*} (1-\beta)\tau M^* < \mu_{fw},\label{st-cond-ineq}
\end{equation}
then $F_w(t) \rightarrow 0$ as $t \rightarrow \infty$. Then $M_w(t) \rightarrow 0$ as $t\rightarrow \infty$ follows from the first equation of  \eqref{lin_1}. Since $M^*$ and $F^*$ are given by \eqref{Fstar} and \eqref{Mstar}, inequality \eqref{st-cond-ineq} is equivalent to assumption \eqref{A4}.

It remains to prove the local stability of $(M,F)=(M^*,F^*)$ as a solution of system \eqref{lin_2}. 
The Jacobian matrix of system \eqref{lin_2} evaluated at $(M^*,F^*)$ is given by
\begin{equation*}
J(M^*,F^*) = \frac{1}{\mu_f + \mu_m} \left( \begin{array}{ll}
-\mu_f \mu_m & \mu_f^2 + \lambda^{\prime}(F^*) F^* \mu_f \\
\mu_m^2 & -\mu_f \mu_m + \lambda^{\prime}(F^*) F^* \mu_f \\
\end{array}
 \right).
\end{equation*}
The eigenvalues $\Lambda$ of $J(M^*,F^*)$ satisfy the characteristic equation
\begin{equation*}
\Lambda^2 + (2\mu_f \mu_m - \lambda^{\prime}(F^*) F^* \mu_f)\Lambda - \lambda^{\prime}(F^*) F^* (\mu_f+\mu_m)\mu_f\mu_m=0.
\end{equation*}
Since $\lambda(\cdot)$ is a non-negative decreasing function, $\lambda^{\prime}(F^*) <0$. We have
\begin{equation*}
\Lambda_1 + \Lambda_2 = -(2\mu_f \mu_m - \lambda^{\prime}(F^*) F^* \mu_f) <0,
\end{equation*}
and 
\begin{equation*}
\Lambda_1  \Lambda_2 = - \lambda^{\prime}(F^*) F^* (\mu_f+\mu_m)\mu_f\mu_m >0,
\end{equation*}
which implies $\mbox{Re}\,\Lambda_1 <0$ and $\mbox{Re}\,\Lambda_2 <0$, so that $(M^*,F^*)$ is locally stable as a solution
of \eqref{lin_2}. Therefore, the \Wolb free equilibrium $E^*=(M^*,F^*,0,0)$ is locally asymptotically stable as a solution of
the full system \eqref{eqM}-\eqref{eqFw}.
\end{proof}

If $\tau=1$, i.e. we have complete maternal transmission of \W, then a boundary equilibrium of the form $(0,0,M_w^*,F_w^*)$ may exist. The components of such an equilibrium solution must satisfy
\begin{equation}
\begin{split}
& \mu_{mw}M_w^*=(1-\gamma)\mu_{fw}F_w^*, \\
& \mu_{fw}=\frac{\lambda(F_w^*)}{M_w^*+F_w^*}(1-\beta)M_w^*.
\end{split}
\label{120914_14}
\end{equation}
Moreover, $(1-\gamma)\mu_{fw}+\mu_{mw}=(1-\beta)(1-\gamma)\lambda(F_w^*)$ must hold. 
Next we study the linear stability of such equilibrium, showing that it is
linearly stable under condition \eqref{120914_12} below. Inequality \eqref{120914_12} does not depend on $\gamma$, the
male killing rate, but the steady state components $M_w^*$ and $F_w^*$ do depend on $\gamma$ in the manner expected (for example, $M_w^*=0$
when $\gamma=1$). Although Theorem \ref{linstab00MF} only apples if
$\tau=1$, we will be interested later on in the case when $\tau$ is just
slightly less than $1$. Then, maternal transmission is imperfect and
\Wolb infected females produce small numbers of uninfected offspring. We anticipate that as $\tau$ decreases from $1$ to a value
just less than $1$, the equilibrium $(0,0,M_w^*,F_w^*)$ shifts to another nearby position with small numbers of \Wolb uninfected individuals
and large numbers of infected ones; with no change of stability for $\tau$ close enough to $1$. The existence and stability of such an
equilibrium will be important later on when we introduce West Nile virus (WNv) disease dynamics because, at a WNv-free equilibrium with large
numbers of \Wolb infected mosquitoes, the basic reproduction number $R_0$ for WNv is likely to be less than $1$. The implication is
that \Wolb infection in mosquitoes has the potential to control WNv infection. It does so by disrupting WNv virus replication causing
WNv infected mosquitoes effectively to remain permanently (or for a very long time) in the latent stage of WNv.    
\begin{thm}\label{linstab00MF}
Assume that $\tau=1$, $\lambda$ is monotone decreasing with
$\displaystyle\lim_{F_{total}\to\infty}\lambda(F_{total})=\lambda_{min}$,
($\lambda_{min}$ sufficiently small) and 
\begin{equation*}
\lambda(0)>\frac{(1-\gamma)\mu_{fw}+\mu_{mw}}{(1-\beta)(1-\gamma)}
\end{equation*}
holds. Then, an equilibrium of the form $(M,F,M_w,F_w)=(0,0,M_w^*,F_w^*)$ exists, and it is locally stable as a solution of \eqref{eqM}-\eqref{eqFw} if 
\begin{equation}
(1-q)\mu_{fw}<(1-\beta)\mu_f. \label{120914_12}
\end{equation}
\end{thm}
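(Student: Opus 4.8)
\section*{Proof proposal}

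The plan is to treat existence and linear stability separately, exploiting a block-triangular structure of the Jacobian that arises because, when $\tau=1$ and $F^*=0$, every birth term in \eqref{eqM}--\eqref{eqF} carries a factor $F$. For existence I would use the compatibility relation $(1-\gamma)\mu_{fw}+\mu_{mw}=(1-\beta)(1-\gamma)\lambda(F_w^*)$ stated just before the theorem. Solving it gives
\[
\lambda(F_w^*)=\frac{(1-\gamma)\mu_{fw}+\mu_{mw}}{(1-\beta)(1-\gamma)},
\]
and since $\lambda$ is monotone decreasing from $\lambda(0)$ down to $\lambda_{min}$, the hypothesis on $\lambda(0)$ together with $\lambda_{min}$ sufficiently small yields a unique $F_w^*>0$. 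I would then set $M_w^*=\frac{(1-\gamma)\mu_{fw}}{\mu_{mw}}F_w^*$ from the first relation in \eqref{120914_14} and verify that the second relation in \eqref{120914_14} is then satisfied automatically: substituting $M_w^*+F_w^*=F_w^*\frac{(1-\gamma)\mu_{fw}+\mu_{mw}}{\mu_{mw}}$ together with the value of $\lambda(F_w^*)$ reduces it to an identity.

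Next I would linearise \eqref{eqM}--\eqref{eqFw} (with $\tau=1$) at $(0,0,M_w^*,F_w^*)$. The structural point is that the birth terms in the $M$ and $F$ equations are all proportional to $F$, so their partial derivatives with respect to $M_w$ and $F_w$ vanish at $F^*=0$; the $4\times4$ Jacobian is therefore block lower-triangular, and its spectrum is the union of the spectra of the uninfected block $A=\partial(M',F')/\partial(M,F)$ and the infected block $D=\partial(M_w',F_w')/\partial(M_w,F_w)$. The block $A$ is upper triangular --- the factor $F$ in the birth terms also annihilates $\partial F'/\partial M$ at $F^*=0$ --- with diagonal entries $-\mu_m$ and $-\mu_f+a$, where the coupling is $a=\frac{\lambda(F_w^*)}{M_w^*+F_w^*}(1-q)M_w^*$. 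Using the equilibrium relation $\frac{\lambda(F_w^*)}{M_w^*+F_w^*}(1-\beta)M_w^*=\mu_{fw}$ this becomes $a=\frac{(1-q)\mu_{fw}}{1-\beta}$, so the eigenvalue $-\mu_f+a$ is negative precisely when $(1-q)\mu_{fw}<(1-\beta)\mu_f$, which is exactly condition \eqref{120914_12}.

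It then remains to show that $D$ is a stable block unconditionally. I would compute its four entries and simplify them using the equilibrium identities, writing $p=M_w^*/(M_w^*+F_w^*)$, $r=F_w^*/(M_w^*+F_w^*)$ and $L=(1-\beta)\lambda'(F_w^*)\frac{M_w^*F_w^*}{M_w^*+F_w^*}<0$. I expect the diagonal entries to reduce to $D_{11}=-\mu_{mw}p$ and $D_{22}=-\mu_{fw}r+L$, giving $\mathrm{tr}\,D=-\mu_{mw}p-\mu_{fw}r+L<0$, while the off-diagonal entries can be written as $D_{21}=\frac{\mu_{mw}r}{1-\gamma}$ and $D_{12}=(1-\gamma)(\mu_{fw}p+L)$. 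Substituting these, the $\mu_{mw}\mu_{fw}pr$ terms cancel and one finds $\det D=-\mu_{mw}L(p+r)=-\mu_{mw}L>0$. Both eigenvalues of $D$ thus have negative real part regardless of \eqref{120914_12}, and combining this with the analysis of $A$ yields local asymptotic stability under condition \eqref{120914_12}.

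The main obstacle I anticipate is precisely this determinant computation for $D$: the raw expressions for $D_{12}$ and $D_{21}$ are unwieldy, carrying both the $\lambda'$ term and the quotient-rule contributions from $\lambda(F_{total})/N$, and it is only after systematically rewriting every entry through the two equilibrium relations in \eqref{120914_14} that the leading $\mu_{mw}\mu_{fw}pr$ terms cancel and the sign of $\det D$ becomes transparent. The existence argument and the $A$-block analysis, by contrast, I expect to be routine.
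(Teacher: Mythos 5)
Your proposal is correct and follows essentially the same route as the paper: the linearised $(M,F)$ block decouples when $\tau=1$ and yields condition \eqref{120914_12} via the equilibrium relation $\mu_{fw}=\frac{\lambda(F_w^*)}{M_w^*+F_w^*}(1-\beta)M_w^*$, while the $(M_w,F_w)$ block is unconditionally stable. Your trace/determinant computation, giving $\det D=-\mu_{mw}L>0$ with $L=(1-\beta)\lambda'(F_w^*)\frac{M_w^*F_w^*}{M_w^*+F_w^*}<0$, simply makes explicit the step the paper dismisses as ``tedious computations'' followed by ``it may be further shown that its eigenvalues both have negative real parts.''
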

\begin{proof} The proof is similar to that of Theorem~\ref{linstabMF00}. The linearisation around $(0,0,M_w^*,F_w^*)$ yields a system of linear equations for $(M,F)$ that (when $\tau=1$) are decoupled from the rest of the system. Moreover, it may be shown that $(M,F)\to (0,0)$ as $t\rightarrow\infty$ if
\begin{equation*}
\frac{(1-q)M_w^*\lambda(F_w^*)}{M_w^*+F_w^*}<\mu_f
\end{equation*}
holds, which becomes inequality \eqref{120914_12} when the equilibrium equations \eqref{120914_14} are invoked. 
Then, the $F_w$ and $M_w$ equations are considered, in the case when $\tau=1$ and $F=M\equiv 0$. Tedious computations yield that the linearisation of that system around the steady state $(M_w^*,F_w^*)$ has the Jacobian matrix equal to $((1-\gamma)\mu_{fw}+\mu_{mw})^{-1}$ times
\begin{equation*}
\left(\begin{array}{cc}
-(1-\gamma)\mu_{fw}\mu_{mw} & (1-\beta)(1-\gamma)\left[\left(\frac{1-\gamma}{1-\beta}\right)\mu_{fw}^2+\lambda'(F_w^*)F_w^*(1-\gamma)\mu_{fw}\right] \\
\mu_{mw}^2 & -\mu_{fw}\mu_{mw}+(1-\beta)\lambda'(F_w^*)F_w^*(1-\gamma)\mu_{fw}
\end{array}
\right), 
\end{equation*}
and it may be further shown that its eigenvalues both have negative real parts. Thus we conclude that the steady state $(0,0,M_w^*,F_w^*)$ is locally asymptotically stable.
\end{proof}

Next we prove that, under certain conditions, both infected and uninfected mosquitoes die out. Note, however, that $(0,0,0,0)$ is not technically an equilibrium of~\eqref{eqM}-\eqref{eqFw}.
\begin{thm}
Suppose that $\lambda$ is monotone decreasing and that $0<\lambda(F_{total})<\mu_f+\mu_m$ for all $F_{total}\geq 0$, and that 
\begin{equation}
(\mu_f+\mu_m)(1-\beta)\tau<\mu_{fw}.
\label{100914_1}
\end{equation}
 Then $(M(t),F(t),M_w(t),F_w(t))\rightarrow (0,0,0,0)$ as $t\rightarrow\infty$ if all of the four variables are sufficiently small initially. 
\end{thm}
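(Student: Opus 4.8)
The plan is to peel off the \Wolb infected compartments first, where condition \eqref{100914_1} forces unconditional exponential decay, and then treat the uninfected core with a logarithmic Lyapunov functional, using the smallness of the initial data only to control the coupling between the two.

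First I would show $F_w(t)\to 0$. Writing $MF_w+M_wF_w=M_{total}F_w$ in \eqref{eqFw} gives
$$F_w'=F_w\left[-\mu_{fw}+\lambda(F_{total})(1-\beta)\tau\,\frac{M_{total}}{N}\right].$$
Since $M_{total}/N\le 1$ and $\lambda(F_{total})<\mu_f+\mu_m$ by hypothesis, the bracket is at most $-\delta$ with $\delta:=\mu_{fw}-(\mu_f+\mu_m)(1-\beta)\tau>0$ by \eqref{100914_1}; hence $F_w(t)\le F_w(0)e^{-\delta t}$ for all $t\ge 0$, with no restriction on the initial size. Feeding this into \eqref{eqMw}, which reads $M_w'\le -\mu_{mw}M_w + (\mu_f+\mu_m)(1-\beta)\tau(1-\gamma)F_w$ after the same majorisation, a variation-of-constants estimate gives $M_w(t)\to 0$. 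Thus both infected compartments vanish, again for arbitrary non-negative initial data.

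Next the uninfected core. I would use $W:=\ln(MF)$. Since the birth bracket in \eqref{eqM} and \eqref{eqF} is identical, $M'/M$ and $F'/F$ share it, and a direct computation gives
$$W'=-(\mu_f+\mu_m)+\frac{\lambda(F_{total})}{N}(M+F)+\mathcal{P},$$
where $\mathcal{P}\ge 0$ collects the contributions of the extra birth terms $(1-\beta)(1-\tau)(MF_w+M_wF_w)+(1-q)M_wF$ divided by $M$ and by $F$. Because $M+F=N-M_w-F_w\le N$ and $\lambda$ is decreasing with $\lambda(0)<\mu_f+\mu_m$, the first two terms are bounded above by $\lambda(0)-(\mu_f+\mu_m)=:-\eta<0$. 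If the perturbation $\mathcal{P}$ can be kept below $\eta/2$, then $W'\le -\eta/2$, so $W\to-\infty$ and $MF\to 0$.

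The hard part is controlling $\mathcal{P}$. Since the vector field in \eqref{eqM}-\eqref{eqFw} is homogeneous of degree one, the origin carries no spectral gap and $\mathcal{P}$ is scale invariant: it is a combination of the infected fractions $M_w/N,F_w/N$ and of ratios such as $M_w/M$ and $F_w/F$, none of which is a priori small merely because $M_w,F_w\to 0$, since $M$ and $F$ tend to $0$ as well. This is exactly where the hypothesis that all four variables are sufficiently small initially enters: I would set up a trapping region near the origin in which the infected variables remain dominated by the uninfected ones, exploiting the exponential decay of $F_w,M_w$ established above (with rates independent of the initial size), so that $\mathcal{P}<\eta/2$ is maintained along the trajectory and $MF\to 0$ follows. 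Finally, to upgrade $MF\to 0$ to $M\to 0$ and $F\to 0$ I would invoke boundedness (re-derived as in the positivity and boundedness proposition) together with a LaSalle argument: the $\omega$-limit set lies in $\{M_w=F_w=0,\ MF=0\}$, whose only invariant subset under the two-dimensional uninfected flow $M'=-\mu_m M,\ F'=-\mu_f F$ on the axes is the origin, giving $(M,F,M_w,F_w)\to(0,0,0,0)$.
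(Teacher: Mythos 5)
Your first step---showing $F_w(t)\le F_w(0)e^{-\delta t}$ with $\delta=\mu_{fw}-(\mu_f+\mu_m)(1-\beta)\tau>0$ and then $M_w(t)\to 0$ by variation of constants---is correct and is exactly the paper's opening move. The divergence, and the problem, is in your treatment of the uninfected core. The paper removes the degeneracy of the origin (the vector field is homogeneous of degree one, so $(0,0)$ is not hyperbolic) by passing to the projective variable $\xi=F/M$, which turns the origin into a genuine equilibrium $(\xi^*,0)$ of the transformed system with $\xi^*=\frac{\lambda(0)+\mu_m-\mu_f}{\lambda(0)+\mu_f-\mu_m}$, and then concludes by linearisation, using $\lambda(0)<\mu_f+\mu_m$ to get both eigenvalues negative. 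Your substitute---the functional $W=\ln(MF)$ with $W'\le-\eta+\mathcal{P}$, $\eta=\mu_f+\mu_m-\lambda(0)$---is a reasonable idea and your computation of $W'$ is right, but the entire burden of the proof is then carried by the claim that a trapping region keeps $\mathcal{P}<\eta/2$, and that claim is not established and does not follow from what you have proved.

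Concretely, $\mathcal{P}$ is controlled by the scale-invariant ratios $F_w/F$ and $M_w/M$ (e.g.\ the term $\tfrac{\lambda}{N}(1-\beta)(1-\tau)\tfrac{MF_w}{F}\le\lambda(0)(1-\beta)(1-\tau)\tfrac{F_w}{F}$), and the exponential decay of $F_w$ and $M_w$ that you established does \emph{not} make these ratios small, because $F$ and $M$ also decay: from \eqref{eqF} one only has $F'/F\ge-\mu_f$, so the best available differential inequality is $\tfrac{d}{dt}\ln(F_w/F)\le\mu_f-\delta$, and the hypotheses \eqref{100914_1} and $\lambda<\mu_f+\mu_m$ do not force $\delta\ge\mu_f$ (take $\mu_{fw}$ only slightly larger than $(\mu_f+\mu_m)(1-\beta)\tau$). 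Hence the region ``infected dominated by uninfected'' cannot be shown positively invariant this way; the inequality permits $F_w/F$ to grow exponentially, and the attempt to bound $F'/F$ more sharply reintroduces the birth terms, i.e.\ the very ratios being controlled. This is the missing idea, not a routine detail: to close the argument you would essentially have to projectivise the system (track $F/M$, $F_w/F$, $M_w/M$ as dynamical variables), which is the route the paper takes in the two-dimensional reduction. A secondary issue is your concluding LaSalle step: the vector field of \eqref{eqM}--\eqref{eqFw} is singular (non-Lipschitz) at the origin, so invariance of the $\omega$-limit set there needs justification---another difficulty the paper's $\xi=F/M$ substitution is designed to avoid.
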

\begin{proof} 
From \eqref{eqMw} and \eqref{eqFw},
\begin{align}
M_w'(t) & \leq  -\mu_{mw}M_w(t)+(\mu_f+\mu_m)(1-\beta)\tau(1-\gamma)F_w(t), \label{Mw-eq}\\
F_w'(t)\ & \leq -\mu_{fw}F_w(t)+(\mu_f+\mu_m)(1-\beta)\tau F_w(t).\label{Fw-eq}
\end{align}
From \eqref{Fw-eq} and \eqref{100914_1} we have $F_w(t)\rightarrow 0$ as $t\rightarrow\infty$. Then inequality \eqref{Mw-eq} implies that $M_w(t)\rightarrow 0$ also. With $M_w=F_w=0$, we are reduced to system \eqref{lin_2} and we now show that $(M(t), F(t))\rightarrow (0,0)$ as $t\rightarrow\infty$, though this result is local, i.e. for small introductions of $F$ and $M$. Note that $(M,F)=(0,0)$ is not an equilibrium of \eqref{lin_2}, due to the singularity, but we can remove that singularity by introducing the new variable $\xi=F/M$.  In terms of the variables $\xi$ and $M$, system \eqref{lin_2} becomes
\begin{equation}
\begin{split}
\xi'(t) &= -(\mu_f-\mu_m)\xi(t)+\left(\frac{1-\xi(t)}{1+\xi(t)}\right)\xi(t)\lambda(M(t)\xi(t)), \\
M'(t) &= -\mu_m M(t)+\frac{M(t)\xi(t)}{1+\xi(t)}\lambda(M(t)\xi(t)).
\end{split}
\label{100914_2}
\end{equation}
We now show that $(\xi,M)=(\xi^*,0)$ is a locally stable steady state of \eqref{100914_2}, where
\begin{equation}
\xi^*=\frac{\lambda(0)+\mu_m-\mu_f}{\lambda(0)+\mu_f-\mu_m},
\end{equation}
provided $\xi^*>0$. The latter is not automatic. However, note that, from the first equation of \eqref{lin_2}, $M'(t)\leq (\lambda(0)-\mu_m)M(t)$. Therefore, if $\lambda(0)<\mu_m$ then $M(t)\rightarrow 0$. The second of \eqref{lin_2} then gives $F'(t)\leq -\mu_f F(t)+\lambda(0)M(t)$ so that $F(t)\rightarrow 0$. By similar reasoning we arrive at the same conclusion if $\lambda(0)<\mu_f$. Therefore, we may assume henceforth that $\lambda(0)\geq \max(\mu_f,\mu_m)$ and, under these circumstances, $\xi^*>0$. The linearisation of the second equation of \eqref{100914_2} near the equilibrium $(\xi,M)=(\xi^*,0)$ reads
\begin{equation*}
M'(t)=-\mu_m M(t)+\lambda(0)\frac{\xi^*}{1+\xi^*}M(t)=\frac{1}{2}(\lambda(0)-\mu_m-\mu_f)M(t),
\end{equation*}
and therefore, since $\lambda(0)<\mu_m+\mu_f$, we have $M(t)\rightarrow 0$. In this limit the $\xi$ equation becomes
\begin{equation*}
\xi'(t) = -(\mu_f-\mu_m)\xi(t)+\left(\frac{1-\xi(t)}{1+\xi(t)}\right)\xi(t)\lambda(0)=F(\xi(t)), \\
\end{equation*}
and to show that $\xi^*$ is locally stable as a solution of this equation, it suffices to show that $F'(\xi^*)<0$. But, after some algebra,
we have
\begin{equation*}
F'(\xi^*)=-\frac{(\lambda(0))^2-(\mu_m-\mu_f)^2}{2\lambda(0)}.
\end{equation*}
To show that $F'(\xi^*)<0$ holds, it suffices to show that $\lambda(0)>|\mu_m-\mu_f|$, i.e. that
both $\lambda(0)>\mu_m-\mu_f$ and $\lambda(0)>\mu_f-\mu_m$ hold. But this follows from the fact that we are now restricting to 
the case when $\lambda(0)\geq \max(\mu_f,\mu_m)$. Therefore, the proof of the theorem is complete.
\end{proof}

\subsection{Existence of strictly positive steady states}
In this section we examine the possible existence of coexistence steady states $(M,F,M_w,F_w)$ of model \eqref{eqM}-\eqref{eqFw}, i.e. steady
states in which each component is strictly positive. It turns out that in some parameter regimes multiple coexistence steady states may exist
while, in others, there is just one or none at all. An understanding of these properties helps us to understand how one might exploit
\Wolb infection in mosquitoes to effectively control WNv. In this section we simplify by assuming that $\gamma=0$, i.e. that there is no
male killing.

At the steady state, dividing \eqref{eqM} by \eqref{eqF}, and \eqref{eqMw} by \eqref{eqFw}, we obtain
\begin{equation}
M=F\frac{\mu_f}{\mu_m},\,\, M_w=F_w\frac{\mu_{fw}}{\mu_{mw}}.\label{ss1}
\end{equation}
From \eqref{eqF} and \eqref{eqFw} we then obtain
\begin{align}
\mu_fF= & \frac{\lambda_*}{F\left(1+\frac{\mu_f}{\mu_m}\right)+F_w\left(1+\frac{\mu_{fw}}{\mu_{mw}}\right)} \nonumber \\
 & \times\left(\frac{\mu_f}{\mu_m}F^2+(1-\beta)(1-\tau)\left(\frac{\mu_f}{\mu_m}FF_w+\frac{\mu_{fw}}{\mu_{mw}}F_w^2\right)+(1-q)\frac{\mu_{fw}}{\mu_{mw}}FF_w\right), \label{ss2} \\
\mu_{fw}F_w= &\frac{\lambda_*}{F\left(1+\frac{\mu_f}{\mu_m}\right)+F_w\left(1+\frac{\mu_{fw}}{\mu_{mw}}\right)}\left((1-\beta)\tau\left(\frac{\mu_f}{\mu_m}FF_w+\frac{\mu_{fw}}{\mu_{mw}}F_w^2\right)\right), \label{ss3}
\end{align}
respectively, where $\lambda_*=\lambda(F+F_w)$. From \eqref{ss3} we have
\begin{equation}
F_w\left(\mu_{fw}+\frac{\mu^2_{fw}}{\mu_{mw}}\right)+F\left(\mu_{fw}+\frac{\mu_{fw}\mu_f}{\mu_m}\right)=\lambda_*
\left(F(1-\beta)\tau\frac{\mu_f}{\mu_m}+F_w(1-\beta)\tau\frac{\mu_{fw}}{\mu_{mw}}\right).\label{ss3-2}
\end{equation}
From \eqref{ss3-2} we obtain
\begin{equation}
F_w\kappa_1(\lambda_*)=F\kappa_2(\lambda_*),\label{ss4}
\end{equation}
where 
\begin{equation}\label{ss5}
\kappa_1(\lambda_*)=\mu_{fw}+\frac{\mu_{fw}^2}{\mu_{mw}}-\lambda_*(1-\beta)\tau\frac{\mu_{fw}}{\mu_{mw}},\quad 
\kappa_2(\lambda_*)=-\mu_{fw}-\mu_{fw}\frac{\mu_f}{\mu_m}+\lambda_*(1-\beta)\tau\frac{\mu_{f}}{\mu_{m}}.
\end{equation}
Note that if at the strictly positive steady state, we have $\kappa_1(\lambda_*)=0$, then this necessarily implies that
$\kappa_2(\lambda_*)=0$. This is only possible if 
\begin{equation}
\frac{\mu_{fw}}{\mu_f}=\frac{\mu_{mw}}{\mu_m}\label{ss5-2}
\end{equation}
holds. This case is excluded from Theorem~\ref{nomulteqtau1} but is treated in the next subsection.

If \eqref{ss5-2} does not hold then, using~\eqref{ss5}, from~\eqref{ss2}  we obtain
\begin{align}
0= & \kappa^2_1(\lambda_*)\left(\mu_f+\frac{\mu^2_f}{\mu_m}-\lambda_*\frac{\mu_f}{\mu_m}\right)-\kappa^2_2(\lambda_*)\lambda_*(1-\beta)(1-\tau)\frac{\mu_{fw}}{\mu_{mw}}  \nonumber  \\ 
& +\kappa_1(\lambda_*)\kappa_2(\lambda_*)\left(\mu_f+\frac{\mu_f\mu_{fw}}{\mu_{mw}}-\lambda_*(1-\beta)(1-\tau)\frac{\mu_f}{\mu_m}-\lambda_*(1-q)\frac{\mu_{fw}}{\mu_{mw}}\right).\label{ss6}
\end{align}
The right hand side of~\eqref{ss6} is, in general, a cubic
polynomial in $\lambda_*$. If there exists a positive root
$\lambda^1_*$, then  
since $\lambda$ is a strictly monotone function, a corresponding unique $F^1+F^1_w$ value may be found. From~\eqref{ss2} 
we may then determine a unique solution $(F^1,F^1_w)$. Further
analytic progress is possible in certain particular cases of interest, which we
now investigate. The first concerns the case when $\tau=1$, or when $\tau$ is very close to $1$, meaning that maternal transmission of \Wolb is complete or nearly complete. This is in fact the biologically relevant case for a number of CI inducing \Wolb strains in mosquito species treated in the literature, see e.g. \cite{Engelstadter2004}. This leads us to expect the existence of a steady state
of \eqref{eqM}-\eqref{eqFw} with large numbers of \Wolb infected mosquitoes and few, or no, uninfected ones. The stability of such a
steady state still depends on the other parameter values, and it will be stable if there is a high probability of mating between infected
males and uninfected females resulting in no offspring (the effect of CI), i.e. $q$ is
close to $1$; see also inequality \eqref{120914_12} for the case $\tau=1$.
The existence of a stable steady state of \eqref{eqM}-\eqref{eqFw} with the above mentioned properties is important because the low
number of \Wolb uninfected females implies that the quantity $F_s^*$, featuring in the first term of the parameter $R_0$ defined later
in \eqref{110914_1}, is small. The likelihood of $R_0$ being less than $1$ (the condition for WNv-eradication) depends mostly
on that first term involving $F_s^*$, since the second term in \eqref{110914_1} involves a small parameter $\varepsilon$ and is
automatically small. For these reasons, we are interested in stable steady states of model \eqref{eqM}-\eqref{eqFw} of the form
$(M^*,F^*,M_w^*,F_w^*)$, with $M^*$ and $F^*$ small compared to $M_w^*$ and $F_w^*$ (and, ideally, $M^*=F^*=0$). 
Therefore, referring to Theorems \ref{linstabMF00} and \ref{linstab00MF}, and restricting for now to the case $\tau=1$, we ideally would like the boundary equilibrium $(M^*,F^*,0,0)$ of Theorem \ref{linstabMF00} to be unstable, and the boundary equilibrium $(0,0,M_w^*,F_w^*)$ of Theorem~\ref{linstab00MF} to be linearly stable. The conditions for this, when $\tau=1$, are that the inequalities $\mu_f(1-\beta)>\mu_{fw}$ and
$(1-q)\mu_{fw}<(1-\beta)\mu_f$ should hold simultaneously, but note that the second of these follows from the first. For $\tau=1$, guided by elementary competition theory, instability of one boundary equilibrium and
stability of the other suggests that there will be no coexistence equilibrium, and this is what we prove in Theorem \ref{nomulteqtau1} below. If $\tau$ is decreased from $1$ to a value slightly below $1$, there is no longer a boundary equilibrium with only \Wolb infected mosquitoes present.
What happens is that the equilibrium $(0,0,M_w^*,F_w^*)$, which exists when $\tau=1$, moves to another nearby point in $\mathbb{R}^4_+$, so
that \Wolb infected mosquitoes now coexist with uninfected ones, the former being dominant. Very importantly, this will be the only coexistence
steady state if $\tau$ is sufficiently close to $1$, and it is a desirable steady state for WNv eradication because particular \Wolb strains can significantly reduce WNv virus replication in mosquitoes, see \cite{Hussain2013}. 

\begin{thm}\label{nomulteqtau1}
Suppose that $\gamma=0$, $\tau=1$, $\mu_f(1-\beta)>\mu_{fw}$,
$\mu_m\mu_{fw}\neq\mu_f\mu_{mw}$, and that $\lambda$ is a strictly positive decreasing function with
$\lambda(\infty)=\lambda_{min}$ (and $\lambda_{min}$ sufficiently small). Then system
\eqref{eqM}-\eqref{eqFw} has no coexistence equilibrium with
$M^*,F^*,M_w^*,F_w^*>0$.  

If the foregoing hypotheses hold, except that $\tau$ is slightly less than $1$, then system \eqref{eqM}-\eqref{eqFw} has precisely one coexistence equilibrium in which \Wolb uninfected mosquitoes exist in very small numbers relative to \Wolb infected ones.
\end{thm}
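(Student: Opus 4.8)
The plan is to treat both assertions through the reduction already set up in \eqref{ss1}--\eqref{ss6}, using the scalar unknown $\lambda_*=\lambda(F+F_w)$. For the non-existence claim I fix $\tau=1$, $\gamma=0$. Any coexistence state has $F,F_w>0$, so \eqref{ss4} forces $\kappa_1(\lambda_*)$ and $\kappa_2(\lambda_*)$ to have the \emph{same} sign; and since $\mu_m\mu_{fw}\neq\mu_f\mu_{mw}$ excludes \eqref{ss5-2}, they cannot vanish simultaneously. Hence $\lambda_*$ must lie in the open interval $\I$ whose endpoints are the simple zeros $\lambda_*^{(1)}=\frac{\mu_{mw}+\mu_{fw}}{1-\beta}$ of $\kappa_1$ and $\lambda_*^{(2)}=\frac{\mu_{fw}(\mu_f+\mu_m)}{\mu_f(1-\beta)}$ of $\kappa_2$. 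Putting $\tau=1$ in \eqref{ss6} makes its right-hand side factor as $\kappa_1(\lambda_*)$ times a bracket; because $\kappa_1\neq0$ on $\I$, the existence of a coexistence state reduces to the vanishing, somewhere in $\I$, of that bracket, which is a quadratic in $\lambda_*$.

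The crux of the first part is that this quadratic has no zero in $\I$. The one computation I would actually perform is to check that its constant term vanishes identically; the underlying identity is $\frac{\mu_{mw}+\mu_{fw}}{1-\beta}\,(\mu_f+\mu_m)=\frac{\mu_{fw}(\mu_f+\mu_m)}{\mu_f(1-\beta)}\cdot\frac{\mu_f(\mu_{mw}+\mu_{fw})}{\mu_{fw}}$, both sides equalling $\frac{(\mu_f+\mu_m)(\mu_{mw}+\mu_{fw})}{1-\beta}$. Thus $\lambda_*=0$ is always a root and the bracket equals a positive constant times $\lambda_*(\lambda_*-\lambda_*^\dagger)$, so it has a single nonzero root $\lambda_*^\dagger$ (and none when $q=0$, where it degenerates to a line through the origin). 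Evaluating the bracket at $\lambda_*^{(2)}$, where the $\kappa_2$-term drops out, leaves a product whose sign is that of $\lambda_*^{(1)}-\lambda_*^{(2)}$ — here I use $\mu_f(1-\beta)>\mu_{fw}$, which puts $\lambda_*^{(2)}$ below the zero $\mu_f+\mu_m$ of the surviving factor. Comparing with the form $\lambda_*(\lambda_*-\lambda_*^\dagger)$ pins $\lambda_*^\dagger$ strictly on the far side of $\lambda_*^{(2)}$ from $\lambda_*^{(1)}$, hence outside $\I$. So there is no admissible $\lambda_*$ and no coexistence equilibrium. This sign bookkeeping is the main obstacle; the vanishing constant term is what rescues it, by determining $\lambda_*^\dagger$ exactly.

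For the perturbation claim I would anchor on the infected-only boundary equilibrium $\bar{P}_0=(0,0,M_w^*,F_w^*)$ of Theorem~\ref{linstab00MF}, which exists and is hyperbolic at $\tau=1$ because $\mu_f(1-\beta)>\mu_{fw}$ implies \eqref{120914_12}. The plan is to apply the implicit function theorem to the steady-state map at $(\bar{P}_0,\tau=1)$. As in the proof of Theorem~\ref{linstab00MF}, the linearisation is block lower-triangular, the $(M,F)$-equations decoupling from $(M_w,F_w)$; the $(M_w,F_w)$-block is the stable matrix computed there and the $(M,F)$-block has eigenvalues $-\mu_m$ and $-\mu_f+\frac{(1-q)M_w^*\lambda(F_w^*)}{M_w^*+F_w^*}<0$, so the full Jacobian is nonsingular and yields a unique smooth branch $P(\tau)$ through $\bar{P}_0$. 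To see that it enters the positive orthant, note that for $\tau<1$ the term $(1-\beta)(1-\tau)(MF_w+M_wF_w)$ in \eqref{eqM}--\eqref{eqF} supplies a strictly positive source proportional to $(1-\tau)M_w^*F_w^*$; a short computation with the triangular Jacobian gives $\frac{dM}{d\tau},\frac{dF}{d\tau}<0$ at $\tau=1$, so $M(\tau),F(\tau)$ become positive and of order $1-\tau$ while $M_w,F_w$ stay near $M_w^*,F_w^*$. This is the coexistence equilibrium, and \eqref{ss4} gives $F/F_w=\kappa_1/\kappa_2=O(1-\tau)$, i.e.\ uninfected mosquitoes are rare.

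It remains to show this is the \emph{only} coexistence state for $\tau$ near $1$, and here the first part is reused. For $\tau<1$ equation \eqref{ss6} is a genuine cubic, and coexistence states still correspond to its roots lying inside $\I$ (each such root determining $F+F_w$, and then $(F,F_w)$, uniquely by monotonicity of $\lambda$). At $\tau=1$ the three roots are $\lambda_*^{(1)}$ (the endpoint of $\I$ attached to $\bar{P}_0$), $\lambda_*=0$, and $\lambda_*^\dagger$, the last two lying at positive distance from $\overline{\I}$ by the first part. By continuity of the roots in $\tau$, for $\tau$ close enough to $1$ exactly one root — the one emanating from $\lambda_*^{(1)}$ — moves into the interior of $\I$ (on the side dictated by the sign computation above), while the perturbations of $0$ and $\lambda_*^\dagger$ remain outside. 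Hence precisely one coexistence equilibrium. Keeping track of this global root configuration, rather than merely the local branch, is the delicate point of the second part.
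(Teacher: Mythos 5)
Your proposal is correct, and its skeleton --- the reduction to \eqref{ss4}--\eqref{ss6}, cancellation of one factor of $\kappa_1$ when $\tau=1$, and the isolation of a single admissible candidate value of $\lambda_*$ --- is the same as the paper's; but the two halves are finished differently enough to be worth contrasting. For non-existence, the paper substitutes the candidate \eqref{081014_1} back into $\kappa_1,\kappa_2$ to obtain the closed forms \eqref{081014_2}--\eqref{081014_3} and reads off $\kappa_1\kappa_2<0$ from the factor $(1-x)(1-1/x)$ with $x=\mu_m\mu_{fw}/(\mu_f\mu_{mw})$; you instead note that same-signedness of $\kappa_1,\kappa_2$ confines $\lambda_*$ to the interval $\I$ between their zeros and place the nonzero root $\lambda_*^\dagger$ outside $\I$ by one sign evaluation of the quadratic at the zero of $\kappa_2$. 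The two arguments are equivalent ($\lambda_*^\dagger\notin\overline{\I}$ is exactly $\kappa_1(\lambda_*^\dagger)\kappa_2(\lambda_*^\dagger)<0$), but yours avoids the explicit algebra of \eqref{081014_2}--\eqref{081014_3} and makes the role of $\mu_f(1-\beta)>\mu_{fw}$ more visible. For the second assertion the paper only performs the formal expansion in $\varepsilon=1-\tau$ and checks $M^{(1)},F^{(1)}>0$; it neither justifies the expansion nor proves uniqueness. Your implicit-function-theorem step (the block-triangular Jacobian at $(0,0,M_w^*,F_w^*)$, with the $(M,F)$-block nonsingular because \eqref{120914_12} follows from $\mu_f(1-\beta)>\mu_{fw}$) makes the existence rigorous, and your root-continuity bookkeeping on the cubic \eqref{ss6} --- at $\tau=1$ the roots are $\lambda_*^{(1)}$, $0$ and $\lambda_*^\dagger$, the latter two at positive distance from $\overline{\I}$ --- supplies the uniqueness that the paper asserts without proof. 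Two shared, minor caveats: both arguments implicitly take $q>0$ (at $q=0$ your quadratic degenerates to $b\lambda_*$, and non-existence still holds because $-b=\bigl[(1-\beta)\mu_f-\mu_{fw}\bigr]\bigl(\tfrac{\mu_{fw}}{\mu_{mw}}-\tfrac{\mu_f}{\mu_m}\bigr)\neq 0$ under the hypotheses, a check you should record), and the second part needs $\lambda(0)>(\mu_{fw}+\mu_{mw})/(1-\beta)$ for the anchor equilibrium to exist, an assumption the theorem statement leaves implicit.
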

\begin{proof} Since we assume $\tau=1$, the form of \eqref{ss6} simplifies and in fact we may cancel $\kappa_1(\lambda_*)$ since we
seek equilibria in which $M^*,F^*,M_w^*,F_w^*>0$. After some further algebra, we find that there is just one value for $\lambda_*$, given
by 
\begin{equation}
\begin{split}
\lambda_*\,q(1-\beta)\frac{\mu_{fw}\mu_f}{\mu_{mw}\mu_m} = &
(1-\beta)\left(\frac{\mu_f\mu_{fw}}{\mu_{mw}}-\frac{\mu_f^2}{\mu_m}\right) 
+q\frac{\mu_f\mu_{fw}^2}{\mu_m\mu_{mw}}
\\
 & +\frac{\mu_f\mu_{fw}}{\mu_m}-(1-q)\frac{\mu_{fw}^2}{\mu_{mw}}.
\end{split}
\label{081014_1}
\end{equation}
If $\lambda_*\leq 0$ then the equation $\lambda_*=\lambda(F+F_w)$ cannot
be solved for $F+F_w$, so we may restrict to the case that
$\lambda_*>0$. Recalling that $\kappa_1$ and $\kappa_2$ are defined
by \eqref{ss5}, we find, with $\lambda_*$ given by \eqref{081014_1} and with $\tau=1$, that
\begin{equation}
\kappa_1(\lambda_*)=\left(1-\frac{\mu_m\mu_{fw}}{\mu_f\mu_{mw}}\right)
\left[\mu_{fw}+\frac{1}{q}\big((1-\beta)\mu_f-\mu_{fw}\big)\right],
\label{081014_2}
\end{equation}
and
\begin{equation}
\kappa_2(\lambda_*)=\frac{1}{q}\big((1-\beta)\mu_f-\mu_{fw}\big)\left(
1-\frac{\mu_f\mu_{mw}}{\mu_m\mu_{fw}}\right).
\label{081014_3}
\end{equation}
Since $\mu_f(1-\beta)>\mu_{fw}$ it follows that the sign of the product $\kappa_1(\lambda_*)\kappa_2(\lambda_*)$ is the same as the sign of
\begin{equation*}
\left(1-\frac{\mu_m\mu_{fw}}{\mu_f\mu_{mw}}\right)
\left(1-\frac{\mu_f\mu_{mw}}{\mu_m\mu_{fw}}\right)
\end{equation*}
and, since we assume $\mu_m\mu_{fw}\neq\mu_f\mu_{mw}$, it follows that $\kappa_1(\lambda_*)\kappa_2(\lambda_*)<0$. This makes it impossible
to find $F>0$ and $F_w>0$ satisfying~(\ref{ss4}), and so there is no coexistence equilibrium. 

Next we prove the second assertion of the theorem. Let
$\tau=1-\varepsilon$. Since we expect the equilibrium
  $(0,0,M_w^*,F_w^*)$, which exists when $\tau=1$, to move to another
  nearby point when $\tau=1-\epsilon$, for $\epsilon$ sufficiently
  small, we seek an equilibrium of \eqref{eqM}--\eqref{eqFw} of the form
\begin{align*}
M(\varepsilon)= & \varepsilon M^{(1)}+\varepsilon^2 M^{(2)}+\cdots, \\
F(\varepsilon)= & \varepsilon F^{(1)}+\varepsilon^2 F^{(2)}+\cdots, \\
M_w(\varepsilon)= & M_w^*+\varepsilon M_w^{(1)}+\varepsilon^2 M_w^{(2)}+\cdots, \\
F_w(\varepsilon)= & F_w^*+\varepsilon F_w^{(1)}+\varepsilon^2 F_w^{(2)}+\cdots.
\end{align*}
Coefficients of $\varepsilon$ yield
\begin{equation}
\mu_m M^{(1)}=\frac{\lambda(F_w^*)}{M_w^*+F_w^*}\left[(1-\beta)M_w^*F_w^*+(1-q)M_w^*F^{(1)}\right], \label{perturb1}
\end{equation}
and
\begin{equation}
\mu_f F^{(1)}=\frac{\lambda(F_w^*)}{M_w^*+F_w^*}\left[(1-\beta)M_w^*F_w^*+(1-q)M_w^*F^{(1)}\right]. \label{perturb2}
\end{equation}
Using
\begin{equation*}
\mu_{fw}=\frac{\lambda(F_w^*)}{M_w^*+F_w^*}(1-\beta)M_w^*,
\end{equation*}
\eqref{perturb2} reads
\begin{equation*}
\mu_fF^{(1)}=\mu_{fw}F_w^*+\frac{\mu_{fw}(1-q)}{1-\beta}F^{(1)},
\end{equation*}
so that 
\begin{equation}
F^{(1)}=\frac{(1-\beta)\mu_{fw}F_w^*}{(1-\beta)\mu_f-(1-q)\mu_{fw}}.\label{perturb3}
\end{equation}
Note that $F^{(1)}>0$ because of the assumption $\mu_f(1-\beta)>\mu_{fw}$, hence from \eqref{perturb1} we conclude that $M^{(1)}>0$ holds, and 
\begin{equation}
M^{(1)}=\frac{\mu_{fw}\mu_f(1-\beta)F_w^*}{\mu_m\left[(1-\beta)\mu_f-(1-q)\mu_{fw}\right]}.\label{perturb4}
\end{equation} 
In conclusion, if $\tau$ is reduced from $1$ to the value $1-\varepsilon$ then the equilibrium $(0,0,M_w^*,F_w^*)$ moves to 
$(\varepsilon M^{(1)},\varepsilon F^{(1)},M_w^*+\varepsilon M_w^{(1)},F_w^*+\varepsilon F_w^{(1)})$, with $M^{(1)}$ and $F^{(1)}$ given by 
\eqref{perturb4} and \eqref{perturb3}, respectively.
\end{proof}
Note that from the proof of Theorem \ref{nomulteqtau1} we can see that at the equilibrium the male/female ratio for \Wolb uninfected mosquitoes 
is given approximately by $\frac{\mu_f}{\mu_m}$.

Theorem~\ref{nomulteqtau1} excludes the case when
$\mu_m\mu_{fw}=\mu_f\mu_{mw}$, which we treat now. In particular, we
assume that $\gamma=0$, and $\frac{\mu_{f}}{\mu_m}=\mu=\frac{\mu_{fw}}{\mu_{mw}}$.  
In this situation, we have $M=\mu F$, and $M_w=\mu F_w$.
From equations \eqref{eqF} and \eqref{eqFw} we obtain
\begin{align}
\mu_fF= & \frac{\mu\lambda_*}{(1+\mu)(F+F_w)}\left(F^2+(1-\beta)(1-\tau)\left(FF_w+F^2_w\right)+(1-q)FF_w\right), \label{ss7} \\
\mu_{fw}F_w= & \frac{\mu\lambda_*}{(1+\mu)(F+F_w)}\left((1-\beta)\tau\left(FF_w+F^2_w\right)\right). \label{ss8}
\end{align}
From equation \eqref{ss8} we find that
\begin{equation}
\lambda_*=\lambda(F+F_w)=\frac{(1+\mu)\mu_{fw}}{\mu(1-\beta)\tau},\label{ss9}
\end{equation}
hence for the existence of a coexistence steady state it is necessary that $\lambda(0)>\frac{(1+\mu)\mu_{fw}}{\mu(1-\beta)\tau}$, in which case 
there exists a unique $c=F+F_w$, such that $\lambda(c)=\frac{(1+\mu)\mu_{fw}}{\mu(1-\beta)\tau}$ holds.
Then, from equation \eqref{ss7} we obtain
\begin{equation}
\frac{\mu_f}{\mu_{fw}}c(1-\beta)\tau F=F^2+(1-q)FF_w+c(1-\beta)(1-\tau)F_w, \label{ss10}
\end{equation}
from which, using $F_w=c-F$, we obtain
\begin{equation}\label{ss11}
0=F^2q+Fc\left((1-q)-(1-\beta)(1-\tau)-(1-\beta)\tau\frac{\mu_f}{\mu_{fw}}\right)+c^2(1-\beta)(1-\tau).
\end{equation}
For the existence of the positive steady state we need to guarantee that the quadratic equation above has (at least one) positive (real) solution, 
and that the solution is less than $c$. 

From equation \eqref{ss11} it is clear that in case of complete maternal transmission, i.e. for $\tau=1$ there cannot be more than one 
coexistence steady state. In this case, from equation \eqref{ss11} we obtain
\begin{equation}
F=\frac{c\left((1-\beta)\frac{\mu_f}{\mu_{fw}}-(1-q)\right)}{q}.\label{ss12}
\end{equation}
Therefore, if
\begin{equation}
1-q<(1-\beta)\frac{\mu_f}{\mu_{fw}}<1\label{ss13}
\end{equation}
holds, then we have $0<F<c$, and if $\lambda(0)>\frac{(1+\mu)\mu_{fw}}{\mu(1-\beta)\tau}$ also holds, then a unique strictly positive steady state exists.

Circumstances under which \eqref{ss13} is likely to hold include that $q$ is sufficiently close to $1$ and, at the same time, $\beta$ is sufficiently close to $1$ or $\frac{\mu_f}{\mu_{fw}}$ is less than $1$. The biological interpretation is clear: for the existence of a coexistence steady state, the fertility cost (or mortality increase) due to \Wolb infection should be sufficiently large.

It is clear from \eqref{ss11} that we can never have more than two coexistence steady states. 
On the other hand it is interesting to show that for some realistic parameter values it is possible to have two coexistence steady states. 

To this end we consider the case $q=1$, $\frac{\mu_f}{\mu_{fw}}=1$,
and we assume that $\tau\ne 1$, $\beta\ne 1$.  
In this case, from \eqref{ss11}, we have
\begin{equation}
F_{1/2}=c\frac{1-\beta}{2}\left(1\pm \sqrt{1-\frac{4(1-\tau)}{1-\beta}}\right).
\end{equation} 
That is, for $0<F_{1/2}<c$ to hold, we need to assume that 
\begin{equation}
4(1-\tau)<1-\beta,\quad 1+\sqrt{1-\frac{4(1-\tau)}{1-\beta}}<\frac{2}{1-\beta}
\end{equation}
hold simultaneously. It is easy to verify that this can be achieved for any $c$, for example with $\tau=0.99$, $\beta=0.5$.
Note that the condition $\frac{\mu_f}{\mu_{fw}}=1$ can be relaxed,
too. Also, by continuity arguments, it follows that for parameter
values close enough we still have two coexistence steady states.
We summarise our findings in the following proposition.
\begin{proposition}
In the case when $\gamma=0$, $q=1$ and
$\frac{\mu_f}{\mu_m}=\frac{\mu_{fw}}{\mu_{mw}}$, there exists a set of
values for the remaining parameters such that system
\eqref{eqM}-\eqref{eqFw} admits two coexistence steady states. 
\end{proposition}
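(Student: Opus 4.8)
The plan is to collapse the four-dimensional steady-state problem to a single scalar quadratic and then exhibit explicit parameters for which that quadratic has two biologically admissible roots. First I would invoke the standing hypotheses $\gamma=0$ and $\frac{\mu_f}{\mu_m}=\mu=\frac{\mu_{fw}}{\mu_{mw}}$ to reduce, via \eqref{ss1}, to $M=\mu F$ and $M_w=\mu F_w$, so that only the pair \eqref{ss7}--\eqref{ss8} remains. Equation \eqref{ss8} immediately fixes $\lambda_*$ through \eqref{ss9}, and since $\lambda$ is strictly decreasing this determines a unique total female count $c=F+F_w$, provided the existence threshold $\lambda(0)>\frac{(1+\mu)\mu_{fw}}{\mu(1-\beta)\tau}$ holds. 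Substituting $F_w=c-F$ into \eqref{ss7} yields the quadratic \eqref{ss11} in $F$. The key observation is that a coexistence equilibrium corresponds exactly to a root $F\in(0,c)$: this forces $F_w=c-F\in(0,c)$ and hence $M=\mu F>0$, $M_w=\mu F_w>0$. Thus the proposition reduces to producing parameters for which \eqref{ss11} has two distinct roots in the open interval $(0,c)$.

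Next I would pin down an explicit example in the symmetric subcase $q=1$ and $\mu_f=\mu_{fw}$ (so that $\frac{\mu_f}{\mu_{fw}}=1$, whence $\mu_m=\mu_{mw}$ follows from the ratio condition), with $\beta\neq1$ and $\tau\neq1$. In this case \eqref{ss11} collapses to $F^2-c(1-\beta)F+c^2(1-\beta)(1-\tau)=0$, whose roots are
\begin{equation*}
F_{1/2}=\frac{c(1-\beta)}{2}\left(1\pm\sqrt{1-\frac{4(1-\tau)}{1-\beta}}\,\right).
\end{equation*}
Both roots are real, distinct and positive as soon as the discriminant is positive, i.e. $4(1-\tau)<1-\beta$; and both fall below $c$ precisely when $1+\sqrt{1-4(1-\tau)/(1-\beta)}<\frac{2}{1-\beta}$. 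I would then verify that these two inequalities hold simultaneously for a concrete choice such as $\tau=0.99$, $\beta=0.5$. Crucially, neither inequality constrains $c$, so after fixing $\mu_f=\mu_{fw}$, $\beta$ and $\tau$ I am still free to choose the egg-laying function $\lambda$ with $\lambda(0)$ large enough that $\lambda(0)>\frac{(1+\mu)\mu_{fw}}{\mu(1-\beta)\tau}$, guaranteeing that the required $c>0$ with $\lambda(c)=\frac{(1+\mu)\mu_{fw}}{\mu(1-\beta)\tau}$ actually exists. This produces two coexistence steady states and settles the proposition, since the chosen data already satisfy $\gamma=0$, $q=1$ and $\frac{\mu_f}{\mu_m}=\frac{\mu_{fw}}{\mu_{mw}}$.

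Finally, to show that the phenomenon is robust and does not hinge on the artificial choice $\mu_f=\mu_{fw}$, I would run a continuity argument. The coefficients of \eqref{ss11} and the threshold in \eqref{ss9} depend continuously on all the parameters, and ``two simple real roots lying strictly inside $(0,c)$'' is an open condition. Since at the base point the discriminant is strictly positive and both root-location inequalities are strict, the conclusion persists on a full neighbourhood of parameter space, in particular when $\frac{\mu_f}{\mu_{fw}}$ is perturbed slightly away from $1$ while the constraint $\frac{\mu_f}{\mu_m}=\frac{\mu_{fw}}{\mu_{mw}}$ is maintained.

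The only point requiring genuine care is the bookkeeping in the reduction step: one must confirm that a root $F\in(0,c)$ is truly equivalent to a strictly positive equilibrium of the full system \eqref{eqM}--\eqref{eqFw}, not merely of the reduced pair. This hinges on the fact that the $M$ and $M_w$ equations share the same right-hand bracket as the $F$ and $F_w$ equations, so that \eqref{ss1} is exactly their consistency condition and no information is lost. One must also check that the existence threshold on $\lambda(0)$ can be met independently of the two root-location inequalities. Both are elementary, but they are where the argument could silently go wrong if the decoupling \eqref{ss1} or the freedom in $c$ were mishandled.
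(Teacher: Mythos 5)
Your proposal is correct and follows essentially the same route as the paper: reduce via \eqref{ss1} to the pair \eqref{ss7}--\eqref{ss8}, fix $\lambda_*$ and hence $c$ from \eqref{ss9}, obtain the quadratic \eqref{ss11}, specialise to $q=1$ and $\frac{\mu_f}{\mu_{fw}}=1$ to get the explicit roots, verify the two admissibility inequalities with $\tau=0.99$, $\beta=0.5$, and conclude by continuity. Your extra remarks on the equivalence of a root $F\in(0,c)$ with a positive equilibrium of the full system and on the independence of the $\lambda(0)$ threshold are sound and only make explicit what the paper leaves implicit.
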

In summary, we have shown that our {\it Wolbachia}
  model \eqref{eqM}-\eqref{eqFw} may exhibit all of the three
  qualitatively different possible scenarios, i.e. when there are
  $0,1$ or $2$ coexistence steady states.

\section{Model incorporating West Nile virus (WNv)}
There has been considerable recent interest in West Nile virus (WNv),
with the great majority of mathematical papers on the topic having
appeared in the last 15 years. Numerous types of models have appeared,
some including spatial effects and others giving consideration to
issues such as age-structure in hosts, optimal control or backward bifurcation. See, for example,
Blayneh et~al~\cite{Blayneh2010}, Bowman et~al~\cite{Bowman2005},
Gourley et~al~\cite{Gourley2007}, Lewis et~al~\cite{Lewis2006} and
Wonham and Lewis~\cite{Wonham2008}.

We introduce the following model as an extension of model
\eqref{eqM}-\eqref{eqFw} to include WNv dynamics. Our
  model for WNv dynamics has similarities to that in
  Bergsman~et~al~\cite{Bergsman2016} with one resident bird 
  population. Hence, as in \cite{Bergsman2016} and in some references therein,
  we compartmentalise the vector and bird population into SEI and SEIR classes,
  respectively. Taking into account our earlier model the complete
  WNv-{\it Wolbachia} mosquito-bird population model takes the
  following form.
\begin{align}
F_s^{\prime}  &=  \displaystyle \frac{\lambda(F_{total})}{M+ M_w + F + F_w} (MF + (1-\beta)(1-\tau)(MF_w + M_wF_w) + (1-q)M_wF ) \nonumber \\
&\quad - \mu_f F_s - \alpha_f p_{bf} F_s \frac{B_i}{B_{total}}, \nonumber \\
F_e^{\prime} & = \displaystyle  \alpha_f p_{bf} F_s \frac{B_i}{B_{total}}  -\mu_f F_e - \nu_f F_e, \nonumber \\
F_i^{\prime} &=  \displaystyle \nu_f F_e - \mu_f F_i,\nonumber \\
F_{ws}^{\prime} & = \displaystyle \frac{\lambda(F_{total})}{M+ M_w + F + F_w} (1-\beta)\tau(MF_w + M_wF_w)  - \mu_{fw} F_{ws} - \alpha_{fw} p_{bf} F_{ws} \frac{B_i}{B_{total}},\nonumber \\
F_{we}^{\prime} & = \displaystyle  \alpha_{fw} p_{bf} F_{ws} \frac{B_i}{B_{total}}  -\mu_{fw} F_{we} - \varepsilon \nu_f F_{we}, \nonumber \\
F_{wi}^{\prime}& =  \displaystyle \varepsilon \nu_{f} F_{we} - \mu_{fw} F_{wi},\nonumber \\
B_s^{\prime}& =  \displaystyle \Pi(B_{total} ) - \mu_b B_s - \left(\alpha_f p_{fb}F_i \frac{B_s}{B_{total}} + \alpha_{fw} p_{fb} F_{wi} \frac{B_s}{B_{total}} \right),\nonumber \\
B_e^{\prime} &=  \displaystyle  \alpha_f p_{fb}F_i \frac{B_s}{B_{total}} + \alpha_{fw} p_{fb} F_{wi} \frac{B_s}{B_{total}} - \mu_b B_e - \nu_b B_e, \nonumber \\
B_i^{\prime} &=  \displaystyle \nu_b B_e - \mu_b B_i - \mu_{bi}B_i - \nu_i B_i,\nonumber \\
B_r^{\prime} &=  \displaystyle \nu_iB_i - \mu_b B_r,\nonumber \\
M^{\prime}  & = \displaystyle \frac{\lambda(F_{total})}{M+ M_w + F + F_w} (MF + (1-\beta)(1-\tau)(MF_w + M_wF_w) + (1-q)M_wF ) - \mu_m M,\nonumber \\
M_w^{\prime} &= \displaystyle \frac{\lambda(F_{total})}{M+ M_w + F + F_w} (1-\beta)\tau(1-\gamma)(MF_w + M_wF_w)  - \mu_{mw} M_w,\label{F}
\end{align}
where
\begin{align*}
B_{total}& = B_s +B_e +B_i +B_r,  \\
F_{total} & = F+ F_w,\quad F_w = F_{ws} + F_{we} + F_{wi},\quad F = F_s + F_e + F_i.
\end{align*}
Some of the parameters of system \eqref{F} are defined after \eqref{eqM}-\eqref{eqFw}; the rest are defined as follows:
\begin{itemize}
\item $\alpha_f$: biting rate of female \Wolb uninfected mosquitoes;
\item $\alpha_{fw}$: biting rate of female \Wolb infected mosquitoes;
\item $p_{bf}$: transmission probability of WNv from infectious birds to WNv-susceptible female mosquitoes;
\item $p_{fb}$: transmission probability of WNv from WNv-infectious female mosquitoes to susceptible birds;
\item $\nu_f$: per-capita transition rate of WNv-exposed female
  \Wolb uninfected mosquitoes to the infectious stage of WNv;
\item $\varepsilon \in [0,1]$: small parameter modelling increased time
  that \Wolb infected mosquitoes spend in the latent stage of WNv,
  due to the tendency of \Wolb infection to hamper the replication
  of WNv in mosquitoes;
\item $\nu_b$: per-capita transition rate of WNv-exposed birds to the
  infectious stage of WNv;
\item $\nu_i$: per-capita rate at which infectious birds recover;
\item $\mu_b$: per-capita natural death  rate for birds;
\item $\mu_{bi}$: per-capita WNv-induced death rate for infectious birds.
\end{itemize}
In this section, it must be emphasized that we are considering two different kinds
of infection. Mosquitoes may be infected by either \Wolb or WNv,
or both. WNv infection is assumed possible only for female mosquitoes
(since it is females that bite) and is modelled using an SEI
(susceptible-exposed-infectious) approach with subscripts $s$, $e$ and
$i$. The variables $F_s$, $F_e$ and $F_i$ denote the numbers of
\Wolb uninfected mosquitoes that have susceptible, exposed and
infectious status with respect to WNv. A subscript $w$ indicates
\Wolb infection, so that $F_{ws}$, $F_{we}$ and $F_{wi}$ denote the numbers of
\Wolb infected mosquitoes that have susceptible, exposed and
infectious status with respect to WNv. The variables $M$ and $M_w$
are the numbers of \Wolb uninfected and \Wolb infected male
mosquitoes, none of which have WNv. Birds are only susceptible to WNv and their numbers are
given by the variables $B_s$, $B_e$, $B_i$ and $B_r$ denoting susceptible,
exposed, infectious and recovered birds. Many of the terms of
system \eqref{F} are also present in \eqref{eqM}-\eqref{eqFw} without
change, here we just discuss the extra terms that model
the addition of WNv dynamics. WNv-susceptible mosquitoes, whether
\Wolb infected or not, acquire WNv infection by biting infectious
birds; this is modelled via the last term in the first and fourth
equations of \eqref{F} using the idea of mass action normalised by
total host density, the biting rates (the $\alpha$ parameters defined
above) having been separated out, rather than being absorbed into the
transmission coefficients $p_{bf}$ and $p_{fb}$ as is often
customary. Having acquired WNv infection from a bird, a mosquito
enters the latent phase of WNv and is classed as an exposed
mosquito. Exposed mosquitoes become WNv-infectious at rates $\nu_f
F_e$ and $\varepsilon\nu_f F_{we}$ for \Wolb uninfected and
\Wolb infected mosquitoes, respectively. In the latter, the presence
of $\varepsilon\in[0,1]$ models the tendency of \Wolb infected
mosquitoes that have contracted WNv infection to spend a greater
amount of time in the latent stage of WNv, since \Wolb infection
tends to block WNv replication making it less likely that such a
mosquito would ever become WNv-infectious. Of course, we are at
liberty to take $\varepsilon$ very small indeed, with the implication
that the \Wolb infected mosquito spends so long in the latent
stage of WNv that it probably dies in that stage. This is our approach
to modelling the blocking of WNv replication by \W.

Birds acquire WNv from bites by WNv-infectious mosquitoes, which may
or may not have \Wolb infection as well. Thus there are two
infection rates for birds, these can be found in the right hand side
of the seventh equation of \eqref{F}, and also in the eighth equation since
birds initially enter the exposed stage of WNv. This has a mean
duration of $1/\nu_b$ for birds, after which they become
WNv-infectious. Birds may recover from WNv, at a per-capita rate
$\nu_i$. Note that, for birds, death due to WNv is modelled using a
separate parameter $\mu_{bi}$ to distinguish from natural death,
accounted for by $\mu_b$.
The function $\Pi(B_{total})$ is the birth rate function for birds.

The approach we use here to model the latency stage of WNv (in either
birds or mosquitoes) is not the
only possible approach. Our approach permits individuals to spend
different amounts of time in the latency stage, and we may only speak
of the mean time spent in that stage. There are other approaches in
which all individuals of a particular status (for example, all
\Wolb uninfected mosquitoes) spend the same amount of time in the
latent stage of WNv. The time could be different for
\Wolb infected mosquitoes. These approaches result in models with
time delays.

\subsection{Local stability of the WNv-free equilibria}
Equilibria of system \eqref{F} may exist in which WNv is absent. Such WNv-free equilibria
include the equilibrium $(M^*,F^*,0,0)$ considered in
Theorem \ref{linstabMF00}, in which both WNv and \Wolb are absent,
and equilibria in which WNv is absent but \Wolb are present. We
show that multiple WNv-free equilibria may coexist that have both
\Wolb uninfected and \Wolb infected mosquitoes, we present a
necessary and sufficient condition for any particular WNv-free
equilibrium to be locally stable, and we show that the most likely
scenario for eradication of WNv is to have large numbers of
\Wolb infected mosquitoes, with solutions of system \eqref{F}
evolving to a WNv-free equilibrium that has large numbers of
\Wolb infected mosquitoes and relatively few uninfected
ones. Theorem \ref{linstabWNvfreeeq} applies to any WNv-free
equilibrium, of which there may be several. Of course, we may have
$R_0<1$ at one WNv-free equilibrium and $R_0>1$ at another. It depends
on the values of $F_s^*$, $F_{ws}^*$ and $B_s^*$ for the particular
WNv-free equilibrium under consideration. For clarity of exposition,
we include as a hypothesis that the equilibrium be stable to the
subset of perturbations in which WNv is absent (i.e. stable as
a solution of the subsystem \eqref{eqM}-\eqref{eqFw}), rather than
including explicit conditions for stability of an equilibrium as a
solution of that subsystem.  The latter stability problem is a tedious
one in its own right and is under consideration elsewhere in this
paper. Theorem \ref{linstabWNvfreeeq}, in the form presented below,
highlights clearly the particular role played by $R_0$. 
\begin{thm}\label{linstabWNvfreeeq}
Let $F_s^*$, $F_{ws}^*$ and $B_s^*$ be the equilibrium values for the female susceptible (to WNv) \Wolb uninfected and \Wolb infected  mosquitoes and susceptible birds, in any WNv-free equilibrium. Let
\begin{equation}
R_0=\frac{\nu_b\nu_fp_{fb}p_{bf}}{(\mu_b+\mu_{bi}+\nu_i)(\mu_b+\nu_b)}\left(\frac{\alpha_f^2(F_s^*/B_s^*)}{\mu_f(\mu_f+\nu_f)}+
\frac{\varepsilon \alpha_{fw}^2(F_{ws}^*/B_s^*)}{\mu_{fw}(\mu_{fw}+\epsilon \nu_f)}\right).
\label{110914_1}
\end{equation}
Then, if $R_0<1$, the WNv-free equilibrium under consideration is locally stable as a solution of the full system \eqref{F}, if it is stable to perturbations in which the exposed and infectious variables remain zero.
\end{thm}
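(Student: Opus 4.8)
The plan is to linearise the full system~\eqref{F} about a WNv-free equilibrium and exploit the block structure that the WNv-free assumption forces on the Jacobian. At such an equilibrium every exposed and infectious variable vanishes, so $F_e^*=F_i^*=F_{we}^*=F_{wi}^*=B_e^*=B_i^*=B_r^*=0$, while $F_s^*$, $F_{ws}^*$, $B_s^*$ and the male components take their equilibrium values. First I would order the perturbation variables so that the ``disease-present'' coordinates $(F_e,F_i,F_{we},F_{wi},B_e,B_i,B_r)$ are grouped separately from the ``disease-free'' coordinates $(F_s,F_{ws},M,M_w,B_s)$. Because WNv transmission terms are all bilinear in a susceptible quantity and an infectious quantity (e.g. $\alpha_f p_{bf}F_s B_i/B_{total}$), each such term linearises to something proportional to one of the infectious perturbations, with the susceptible factor frozen at its equilibrium value. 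The key structural observation is therefore that the Jacobian is block lower-triangular: the equations for the disease-free perturbations, linearised, do not feed back any of the infectious perturbations at leading order except through terms that themselves vanish at the WNv-free state. Consequently the spectrum of the full Jacobian is the union of the spectrum of the disease-free block and the spectrum of the infectious block.

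Next I would dispose of the disease-free block using the hypothesis directly. By assumption the equilibrium is stable to perturbations in which the exposed and infectious variables remain zero, i.e. stable as a solution of the mosquito-only subsystem~\eqref{eqM}--\eqref{eqFw} (the $B_s$ equation decouples and contributes the manifestly stable eigenvalue associated with $\Pi'(B_{total}^*)-\mu_b$ governing the bird birth--death balance). Hence all eigenvalues of the disease-free block have negative real part by hypothesis, and I do not need to re-derive them; this is precisely why the theorem was stated with that stability clause built in, sparing us the tedious subsystem analysis referred to in the text.

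The substance of the argument is the infectious block, and this is where I would concentrate. The $7\times 7$ matrix governing $(F_e,F_i,F_{we},F_{wi},B_e,B_i,B_r)$ splits further: the recovered-bird perturbation $B_r$ appears only as a sink (its equation is $B_r'=\nu_i B_i-\mu_b B_r$), contributing the stable eigenvalue $-\mu_b$ and decoupling from stability of the rest. What remains is a $6\times 6$ system coupling the two mosquito exposed/infectious pairs to the bird exposed/infectious pair. I would write this matrix out and apply the next-generation decomposition, splitting it as $\mathbf{T}-\mathbf{\Sigma}$ where $\mathbf{T}$ collects the new-infection (off-diagonal, transmission) terms and $\mathbf{\Sigma}$ the transition-and-removal terms; the spectral radius of $\mathbf{T}\mathbf{\Sigma}^{-1}$ is exactly the $R_0$ displayed in~\eqref{110914_1}, with the two additive terms arising from the two independent mosquito-to-bird-to-mosquito transmission loops (the $\varepsilon$-weighted second term reflecting the reduced exit rate $\mu_{fw}+\varepsilon\nu_f$ from the \Wolb-infected exposed class). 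The standard next-generation result then gives that the infectious block is stable iff $R_0<1$. I expect the main obstacle to be the algebra of verifying that the characteristic polynomial of the $6\times 6$ block has all roots in the left half-plane precisely when $\rho(\mathbf{T}\mathbf{\Sigma}^{-1})<1$: rather than grinding through a sextic, I would invoke the Van den Driessche--Watmough framework (or directly check that $\mathbf{\Sigma}$ is a nonsingular M-matrix and $\mathbf{T}\geq 0$, so that $s(\mathbf{T}-\mathbf{\Sigma})<0 \iff \rho(\mathbf{T}\mathbf{\Sigma}^{-1})<1$), thereby reducing the stability question to the sign of $R_0-1$ and completing the proof.
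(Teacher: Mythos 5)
Your proposal is correct and reaches the stated conclusion, but it handles the decisive step --- stability of the infectious block --- by a genuinely different route from the paper. You both begin identically: linearise \eqref{F} at the WNv-free equilibrium, observe the block-triangular structure forced by the bilinearity of the transmission terms, discharge the disease-free block via the stated hypothesis, and strip off $B_r$ as a pure sink. For the remaining $6\times 6$ infectious block, however, the paper does \emph{not} invoke next-generation machinery: it notes that the block in \eqref{110914_4} is cooperative/quasi-positive, so Theorem~5.5.1 of Smith \cite{Smith95} permits restricting attention to the \emph{real} roots of the characteristic equation, which is then rewritten in the form \eqref{110914_5} and analysed graphically --- the left side is a quadratic with two negative real roots, the right side is positive and decreasing for $\lambda\geq 0$, so if the left side exceeds the right side at $\lambda=0$ (precisely the condition $R_0<1$) every real root is negative. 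Your M-matrix/van den Driessche--Watmough argument rests on the same underlying positivity structure and is equally valid; it buys you a citable general theorem in place of the paper's explicit manipulation of a sextic, while the paper's route produces the closed form \eqref{110914_1} for $R_0$ directly from the characteristic equation. Two small cautions: first, with the standard splitting in which only bird-to-mosquito and mosquito-to-bird transmissions count as new infections, $\rho(\mathbf{T}\mathbf{\Sigma}^{-1})$ comes out as the \emph{square root} of the expression in \eqref{110914_1} (the two-generation host--vector loop), not that expression itself; the threshold $\rho<1$ if and only if $R_0<1$ is unaffected, but you should not assert equality with \eqref{110914_1} without specifying your splitting. Second, the eigenvalue you attribute to the $B_s$ direction, $\Pi'(B_{total}^*)-\mu_b$, is not ``manifestly'' negative for a general birth function $\Pi$; it is simply absorbed into the hypothesis of stability to WNv-free perturbations, which is all you need.
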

\begin{proof} At any WNv-free equilibrium, the linearisation of system \eqref{F} decouples to some extent making it sufficient to show that, when $R_0<1$, each component of the solution of the following system:
\begin{eqnarray}
F_e' & = & \frac{\alpha_fp_{bf}F_s^*}{B_s^*}B_i-(\mu_f+\nu_f)F_e, \nonumber \\
F_i' & = & \nu_f F_e - \mu_f F_i, \nonumber \\
F_{we}' & = & \frac{\alpha_{fw}p_{bf}F_{ws}^*}{B_s^*} B_i-(\mu_{fw}+\varepsilon \nu_f)F_{we}, \nonumber \\
F_{wi}' & = & \varepsilon \nu_f F_{we}-\mu_{fw}F_{wi}, \label{110914_4} \\
B_e' & = & \alpha_f p_{fb}F_i+\alpha_{fw}p_{fb}F_{wi}-(\mu_b+\nu_b)B_e, \nonumber \\
B_i' & = & \nu_b B_e - (\mu_b+\mu_{bi}+\nu_i)B_i, \nonumber
\end{eqnarray}
tends to zero. It is taken as a hypothesis that the susceptible variables then approach their respective steady state values. Note that system \eqref{110914_4} has a structure that allows the application of Theorem 5.5.1 in Smith \cite{Smith95}, making it possible to restrict attention to the real roots of the characteristic equation associated with \eqref{110914_4}. That  characteristic equation, corresponding to trial solutions with temporal dependence $\exp(\lambda t)$, is most easily analysed when written in the form
\begin{equation}
\begin{split}
&(\lambda+\mu_b+\mu_{bi}+\nu_i)(\lambda+\mu_b+\nu_b)B_s^* \\
& = \nu_b\nu_fp_{fb}p_{bf}\left[\frac{\alpha_f^2F_s^*}{(\lambda+\mu_f)(\lambda+\mu_f+\nu_f)}+\frac{\varepsilon\alpha_{fw}^2F_{ws}^*}{(\lambda+\mu_{fw})(\lambda+\mu_{fw}+\epsilon \nu_f)}\right].
\end{split}
\label{110914_5}
\end{equation}
As functions of the real variable $\lambda$, the right hand side of \eqref{110914_5} is decreasing, at least for $\lambda\geq 0$, while the left hand side is a quadratic with two real negative roots. A simple graphical argument shows that if the left  hand side exceeds the right hand side when $\lambda=0$ (i.e., if $R_0<1$, with $R_0$ defined by \eqref{110914_1}), then any real roots of the characteristic equation are negative which, since only the real roots need to be considered, implies that each component of the solution of \eqref{110914_4}  approaches zero as $t\rightarrow\infty$. The proof of the theorem is now complete. 
\end{proof}

\section{Numerical simulations}
In the simulations we set $\lambda(F_{total}):=re^{-F_{total}/k}$,
where $r$ is the maximum per-capita mosquito egg-laying rate, and $k$
measures intra-specific competition among female
mosquitoes. It should be noted that our model assumes that the
mosquito population persists annually, as for example in the tropical climates of South-East Asia.

\begin{table}\begin{tabular}{lll}
\hline
Symbol & Definition & Value \\
\hline
$\mu_m$ & Per-capita mortality rate of male mosquitoes & 1/20  \\
$\mu_f$ & Per-capita mortality rate of female mosquitoes & 1/20   \\
$\mu_{wm}$ & Per-capita mortality rate of {\it W}-infected male mosquitoes & 1/20   \\
$\mu_{wf}$ & Per capita mortality rate of {\it W}-infected female mosquitoes & 1/20  \\
$r$ & Maximum per-capita mosquito egg-laying rate & 30  \\
$k$ & Competition coefficient for mosquitoes & 5000   \\
$\beta$ & Fitness cost of {\it W}-infection on reproduction &  $\in [0,1]$ \\
$\tau$ & Maternal transmission rate of \Wolb & $\in [0,1]$   \\
$q$ & Strength of CI due to {\it W}-infection & $\in [0,1]$   \\
$\gamma$ & Male killing rate due to {\it W}-infection & $\in [0,1]$  \\
$\alpha_f $ & Per-capita {\it W}-free mosquito biting rate  & 0.09  \\
$\alpha_{fw} $ & Per-capita {\it W}-infected mosquito biting rate & 0.09  \\
$p_{bf}$ & WNv transmission coefficient from birds to mosquitoes & 0.16 \\
$p_{fb} $ & WNv transmission coefficient from mosquitoes to birds & 0.88 \\
$\mu_b$ & Natural per-capita mortality rate of birds & $1/(365\times 3)$  \\
$\mu_{bi}$ & WNv-induced per-capita death rate of birds & 0.1   \\
$\Pi(B)$ & Birth rate of birds & 100/365  \\
$\nu_f$ & Per-capita rate at which {\it W}-free mosquitoes &  \\
&  complete WNv-latency and become WNv-infectious  &  $\in [0,1]$ \\
$\varepsilon$ & $\varepsilon\, \nu_f$ is the per-capita rate at which {\it W}-infected mosquitoes &   \\
 &  complete WNv-latency and become WNv-infectious  &       \\
$\nu_b$ & Per-capita rate at which exposed birds become infectious  & 0.2 \\
$\nu_i$ & Per capita rate at which infectious birds recover   & 0.2  \\
\hline
\end{tabular}
\vspace{3mm}
\caption{Definition of parameters. Time is measured in days, and rates
  in day$^{-1}$. $W$ stands for \Wolb so that, for example,
  $W$-infection means \Wolb infection. Parameter values have been
  chosen purely to demonstrate possible solution behaviour and are
  not based on data.}\label{table1} 
\end{table}

\begin{figure}[H]
\centering
\subfigure{\includegraphics[width=6.6cm]{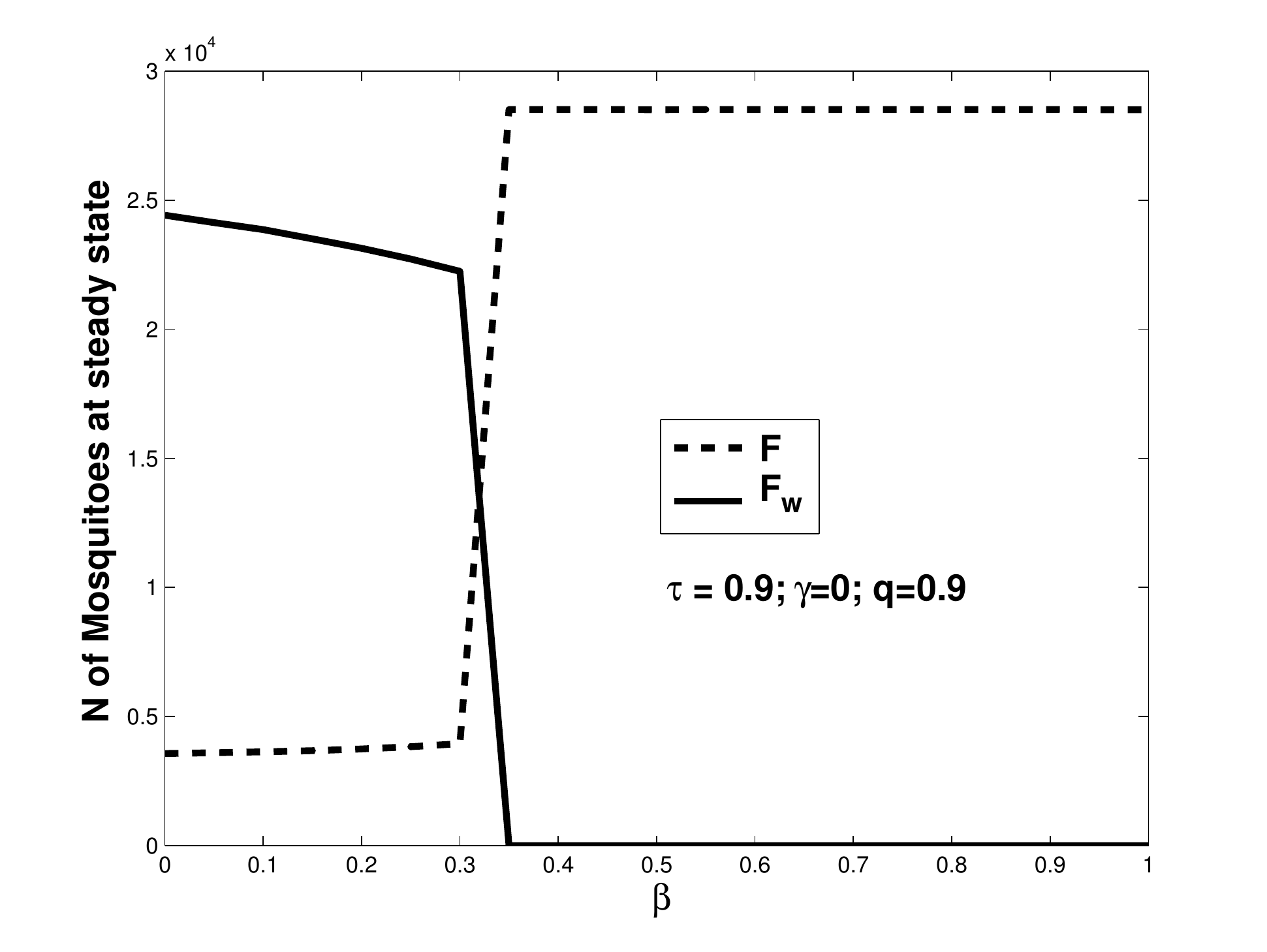}}
\subfigure{\includegraphics[width=6.6cm]{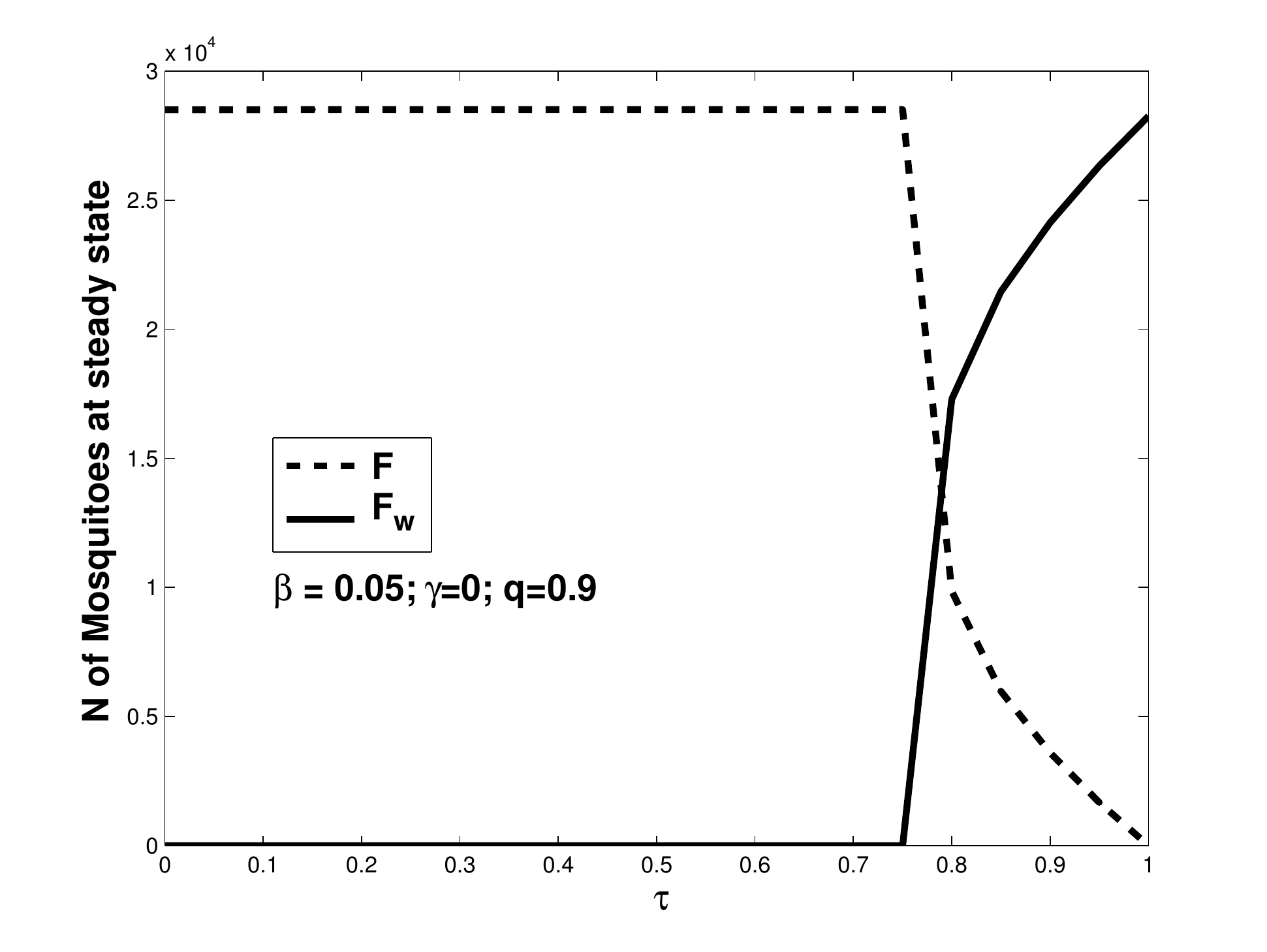}}
\subfigure{\includegraphics[width=6.6cm]{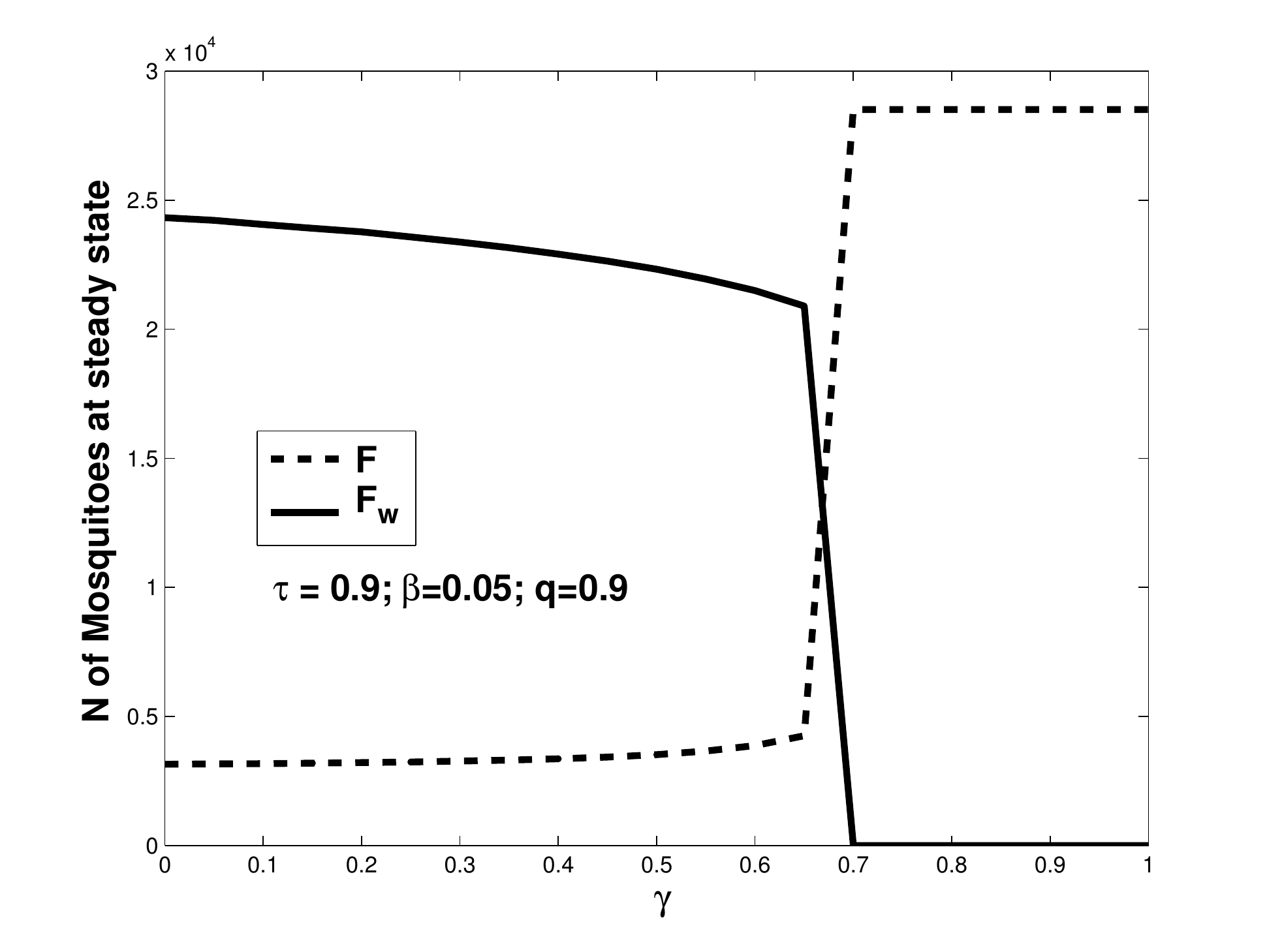}}
\subfigure{\includegraphics[width=6.6cm]{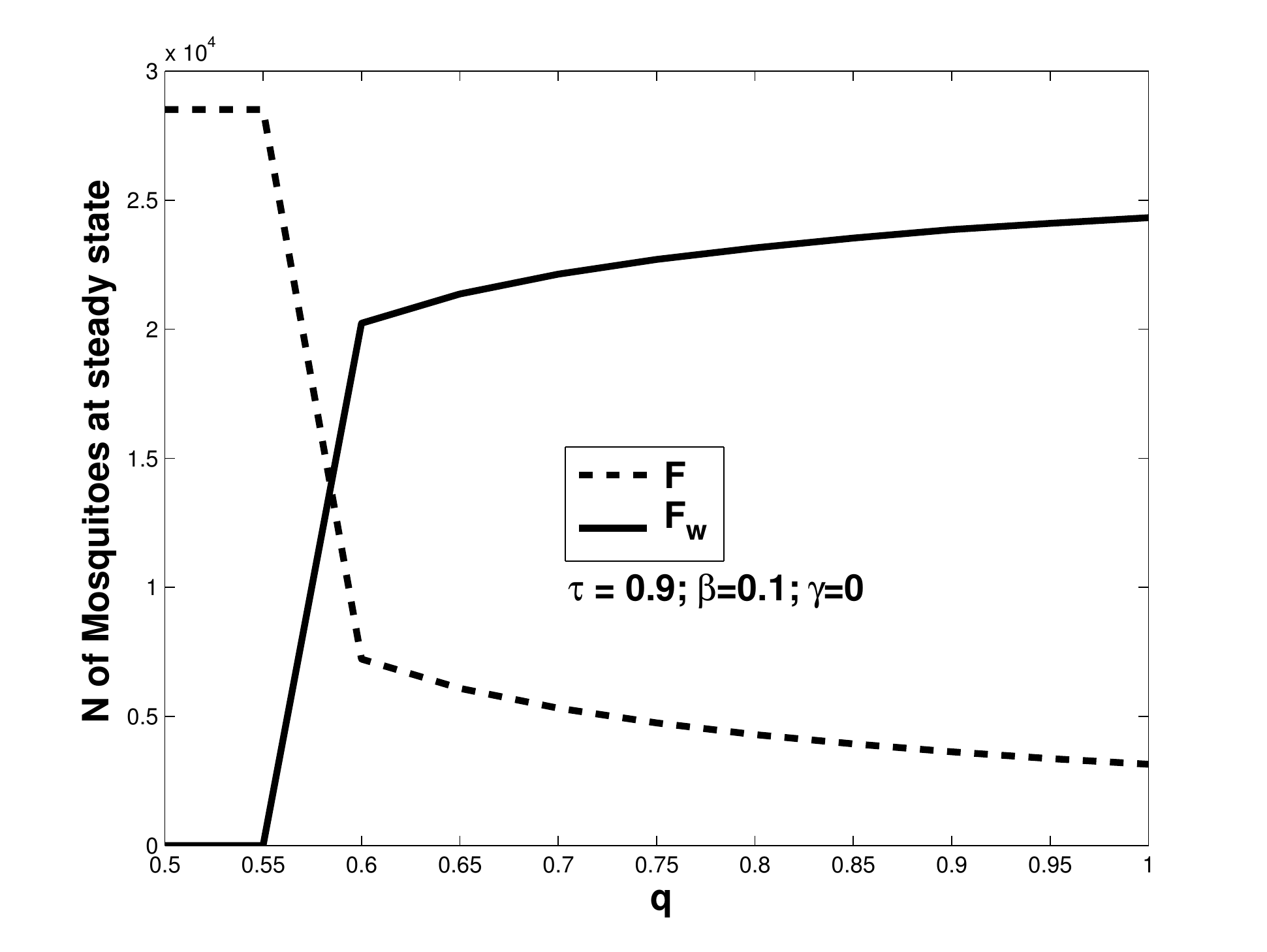}}
\caption{The dependence of the values of $F$ and $F_w$ at the steady
  state of model \eqref{eqM}-\eqref{eqFw} on the parameters $\beta,
  \tau, \gamma$ and $q$. Here, all the per-capita mortality rates of
  mosquitoes were taken as $0.05$, and we set $r=30$,
  $k=5000$. } \label{fig1} 
\end{figure}

\begin{figure}[H]
\centering
{\includegraphics[width=7cm]{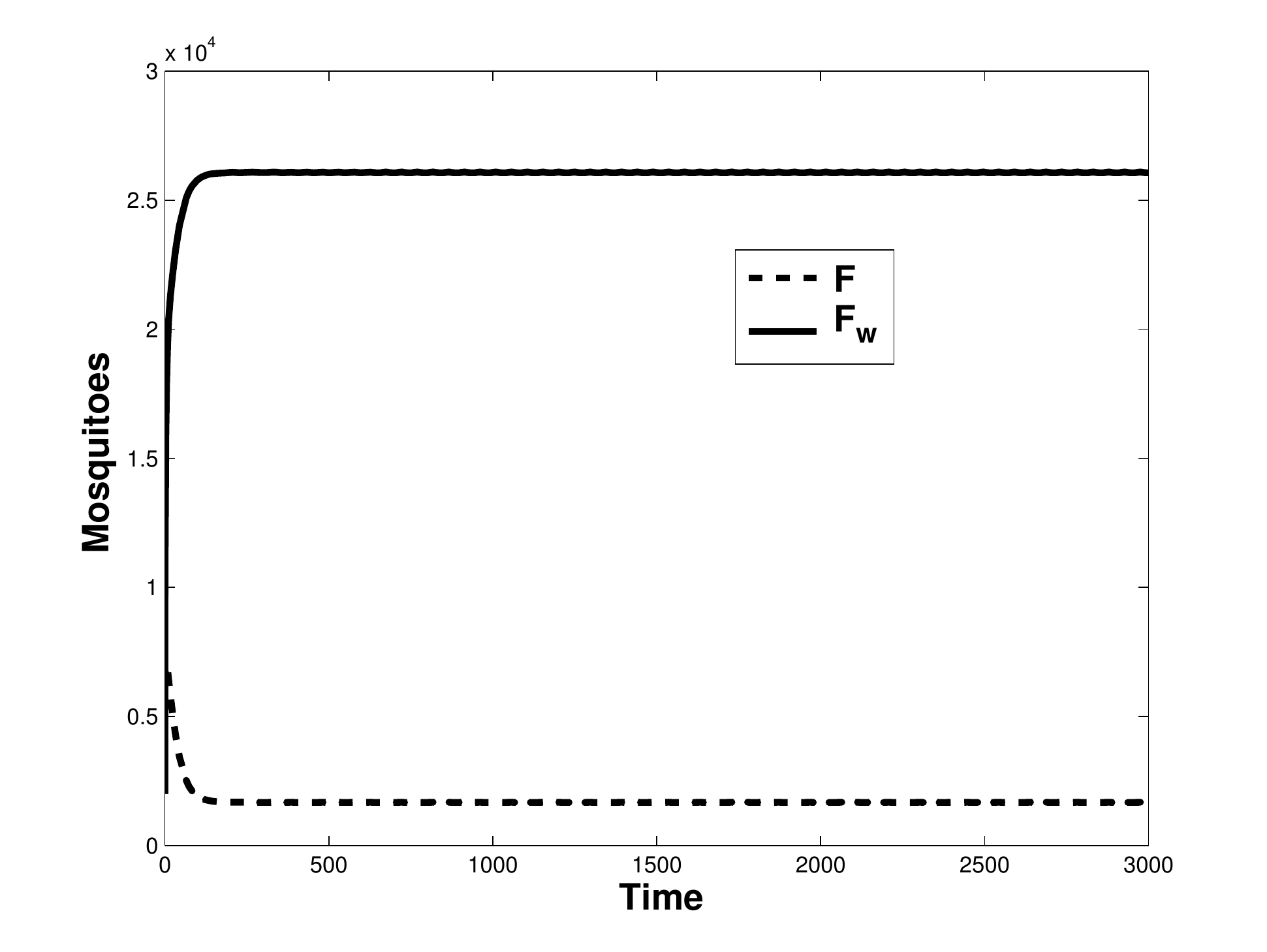}}
\caption{Simulation of model \eqref{eqM}-\eqref{eqFw} showing $F$ and $F_w$ against time. Here, all the per-capita mortality rates of mosquitoes are taken as $0.05$, and we set $r = 30$, $k = 5000$, $\beta = 0.1$, $q = 0.9$, $\gamma = 0$ and  $\tau = 0.95$. } \label{fig2}
\end{figure}

\begin{figure}[H]
\centering
{\includegraphics[width=7cm]{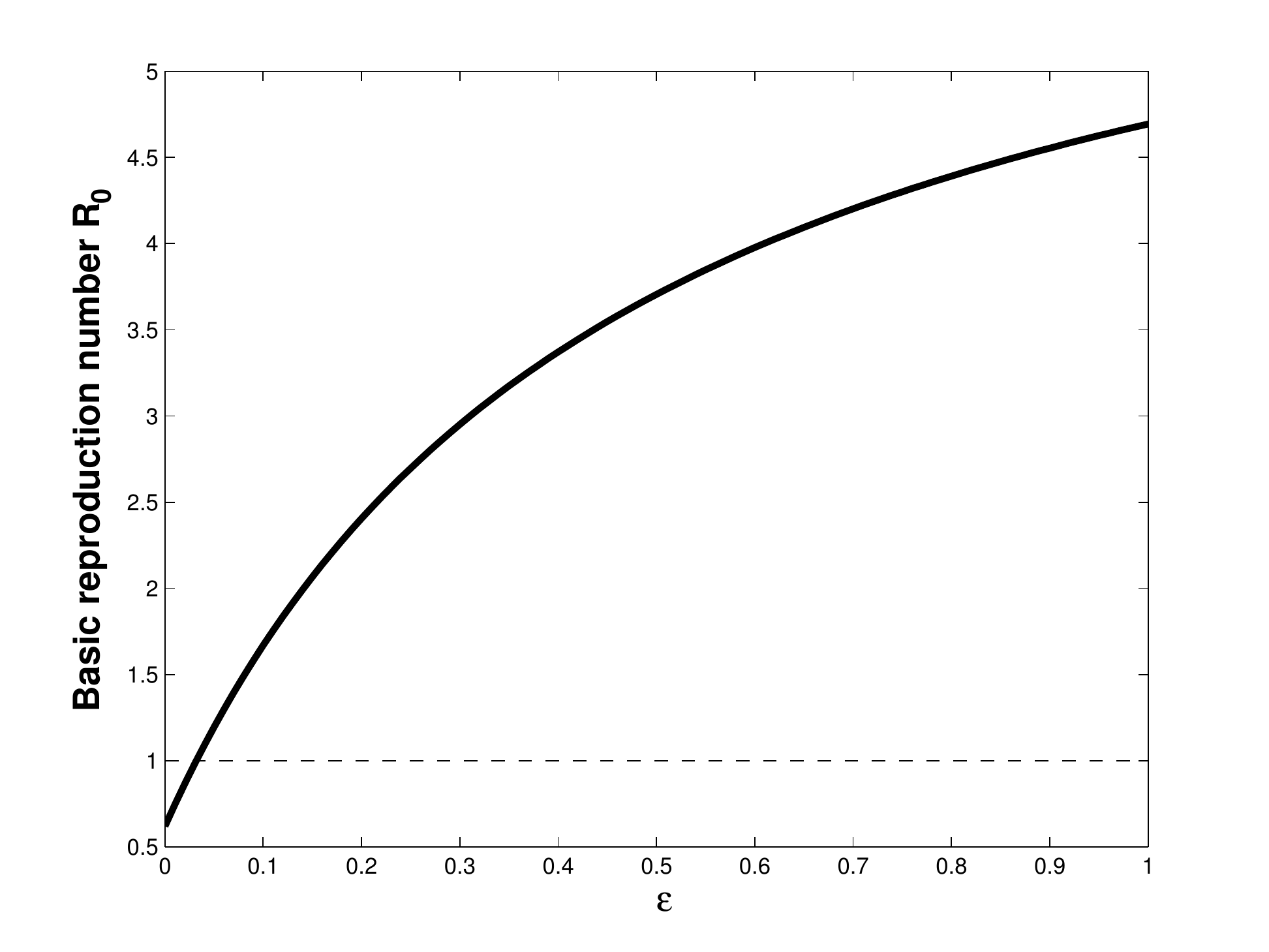}}
\caption{The basic reproduction number $R_0$, defined in \eqref{110914_1}, plotted against $\varepsilon$. The parameter values are $ \beta = 0.1$, $q = 0.9$, $\gamma = 0$, $\tau = 0.9$, with the other parameter values given in Table \ref{table1}. For these parameter values, $F_s^* = 3630$, $F_{ws}^* = 23837$, $B_s^* = 300$, and almost all of the mosquitoes are infected with {\it Wolbachia}. In this case, if $\varepsilon<0.03$, the basic reproduction number $R_0<1$, and the WNv will die out.} \label{fig3}
\end{figure}

\begin{figure}[H]
\centering
\subfigure[$B_s$ versus time]{\includegraphics[width=6.6cm]{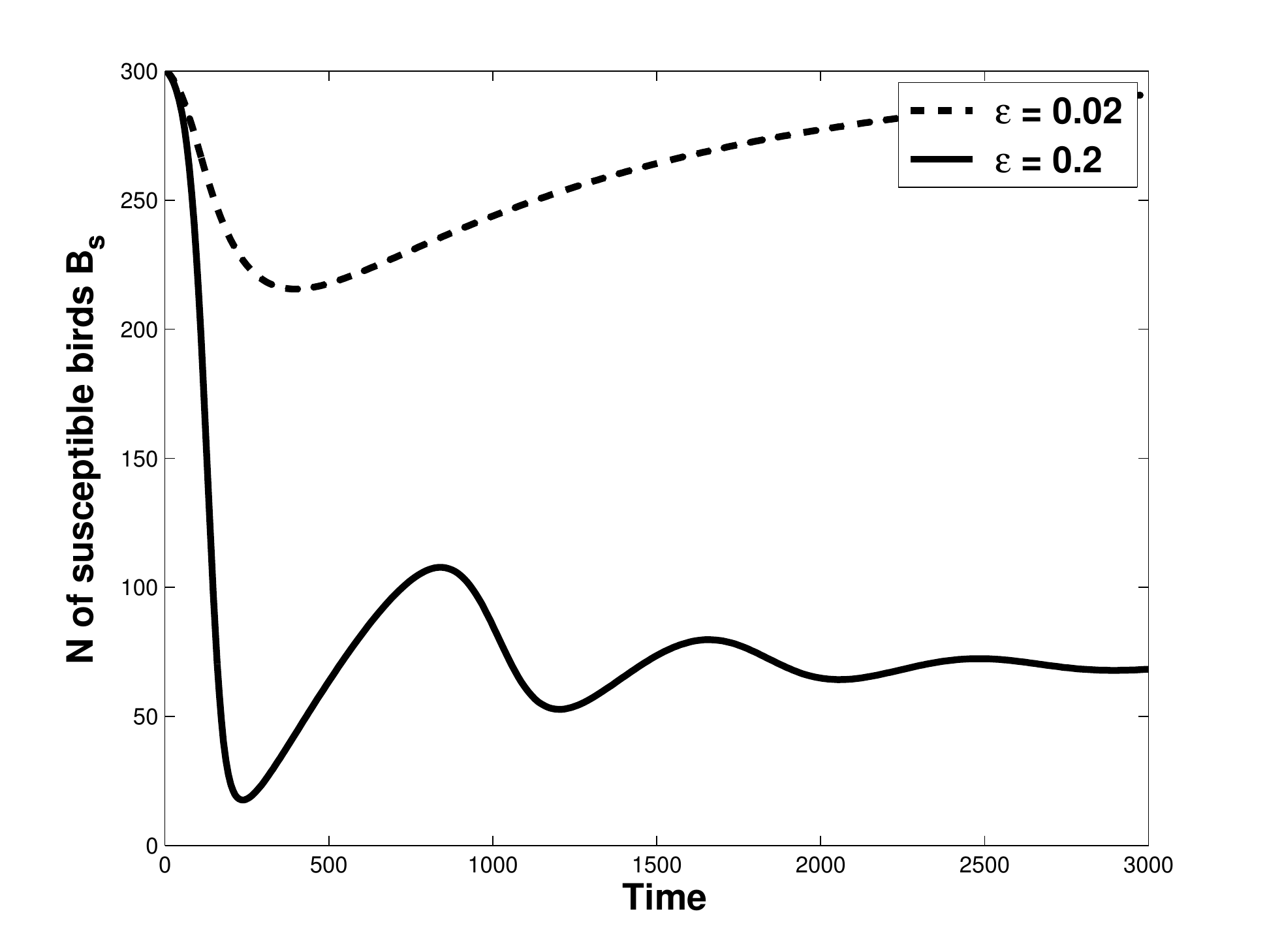}}
\subfigure[$B_i$ versus time]{\includegraphics[width=6.6cm]{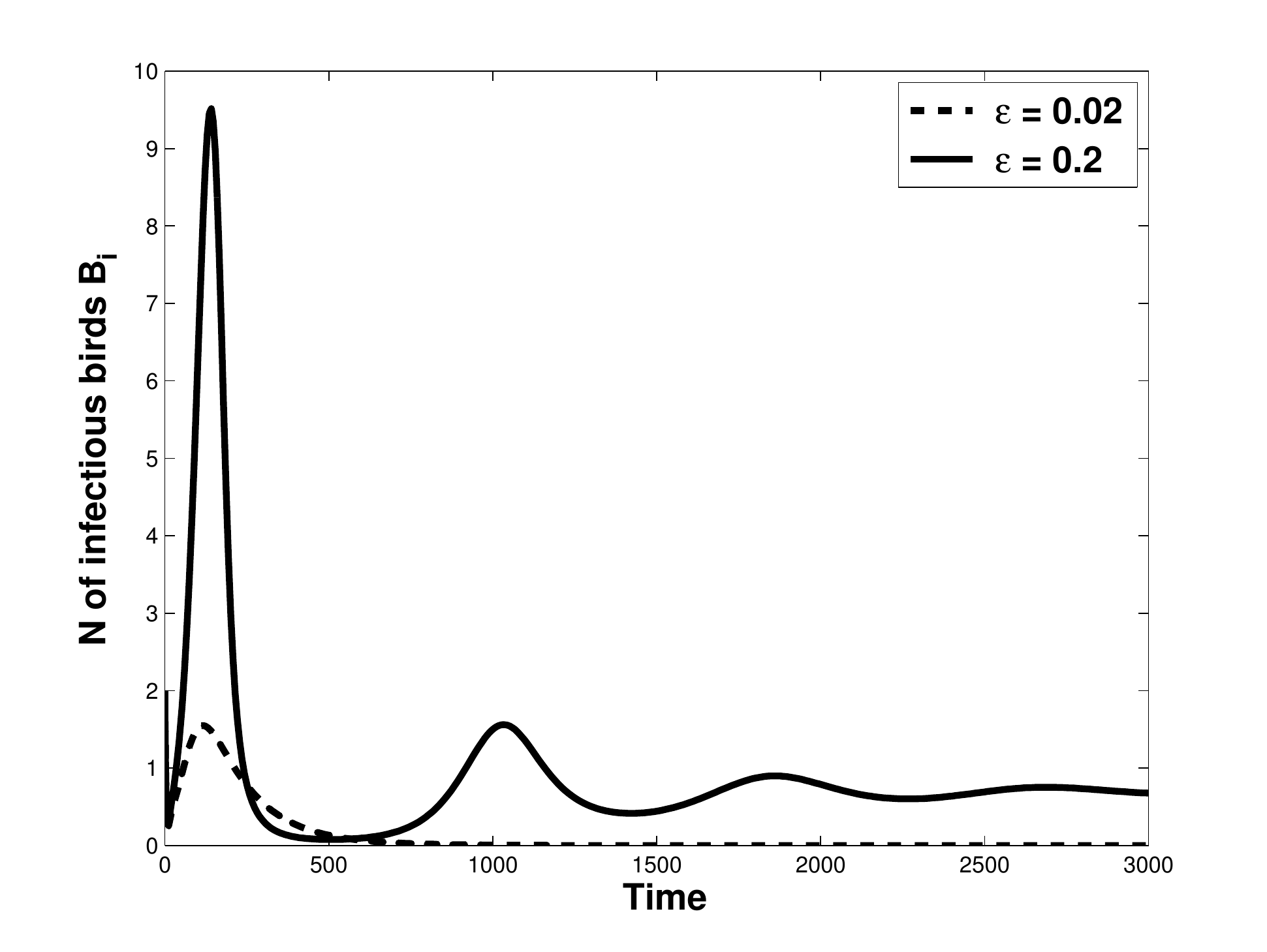}}
\subfigure[$F_i$ versus time]{\includegraphics[width=6.6cm]{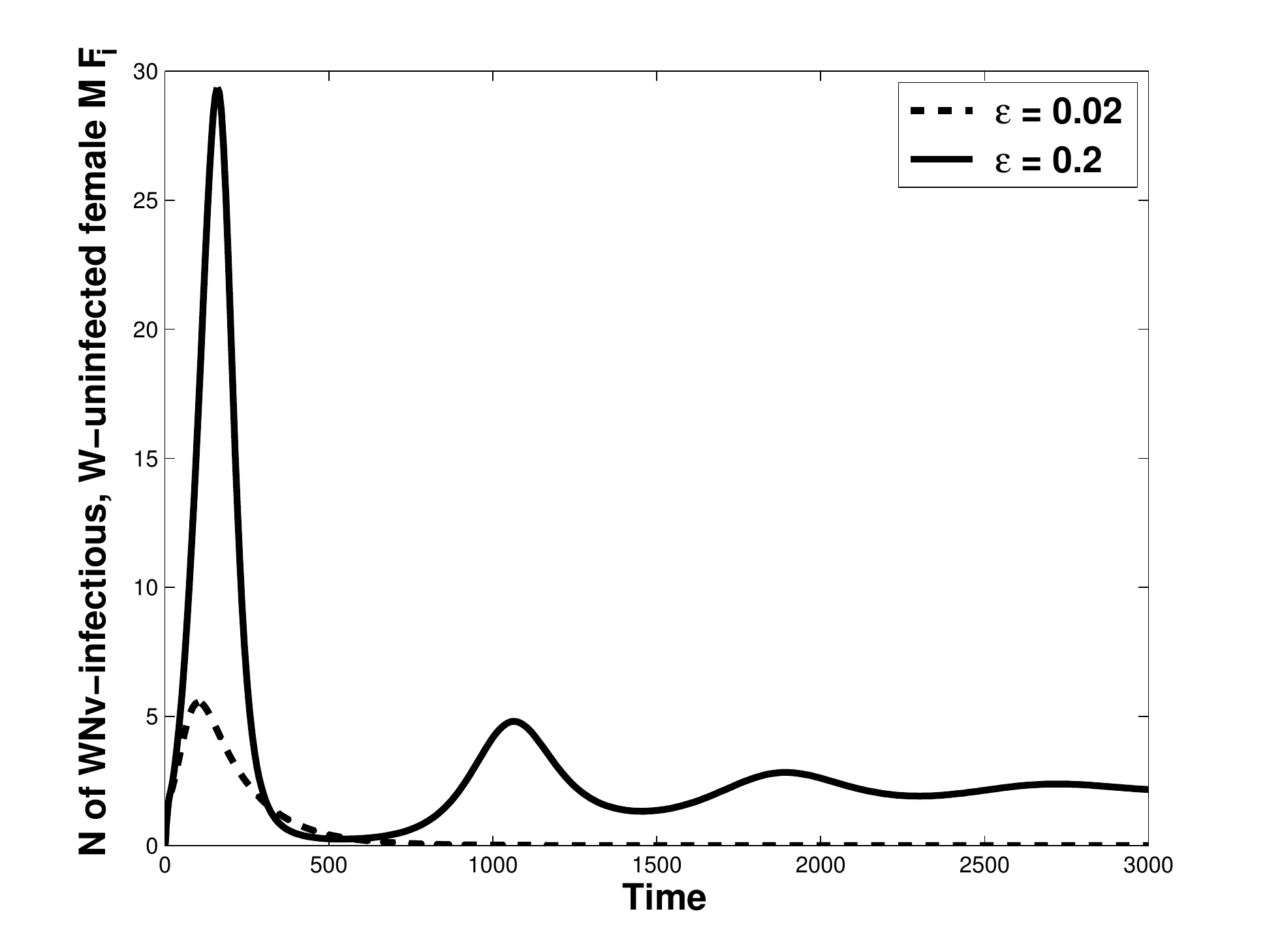}}
\subfigure[$F_{wi}$ versus time]{\includegraphics[width=6.6cm]{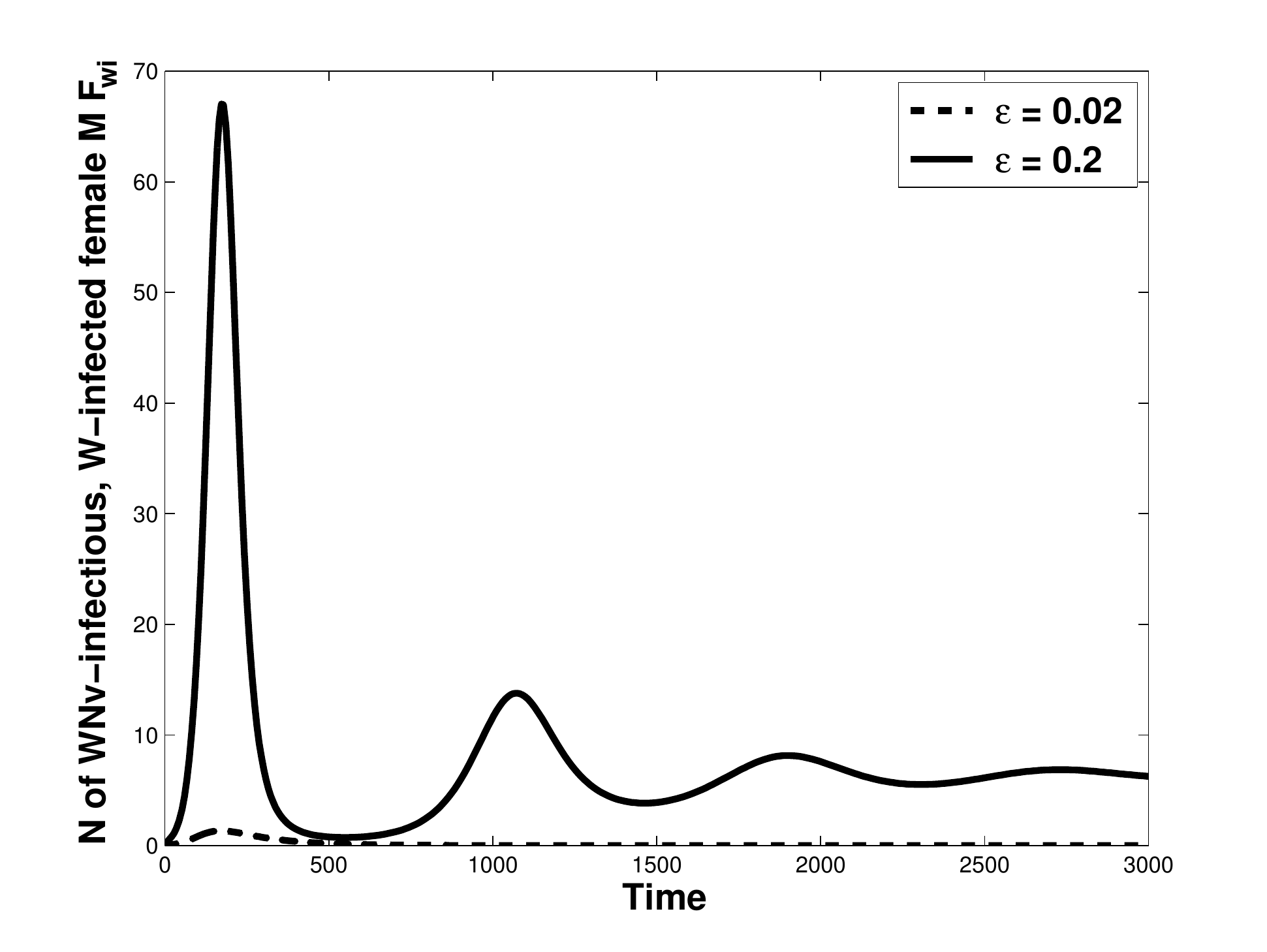}}
\caption{Simulation of model \eqref{F} with the parameter values as given in the caption of Figure \ref{fig3} for the cases $\varepsilon=0.2$ and $\varepsilon=0.02$. In the case $\varepsilon=0.02$, WNv dies out.}\label{fig4}
\end{figure}

\begin{figure}[H]
\centering
{\includegraphics[width=7cm]{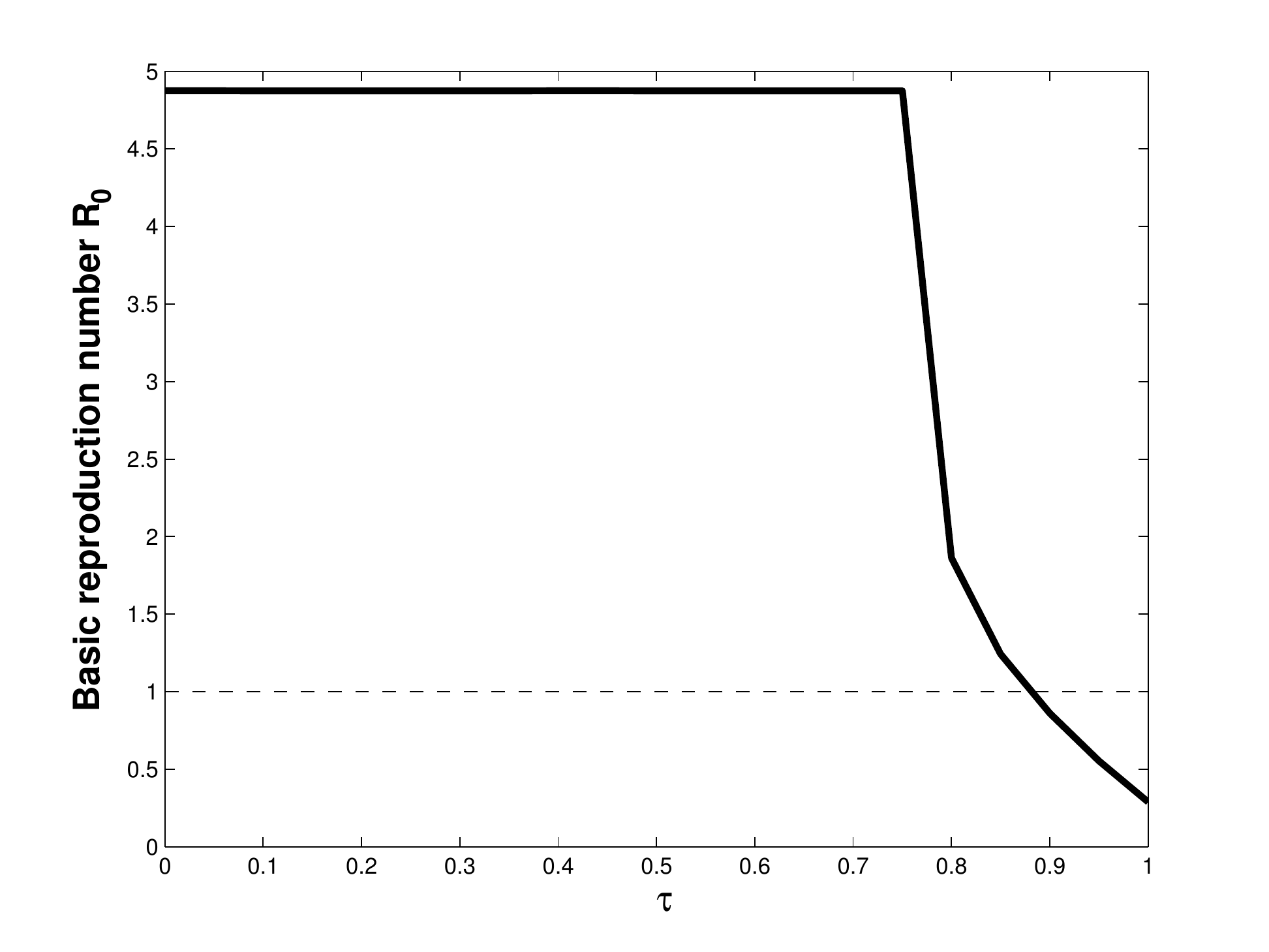}}
\caption{Basic reproduction number $R_0$, plotted against $\tau$, with $\beta = 0.1$, $q = 0.9$, $\gamma = 0$, $\varepsilon = 0.02$, and the other parameter values given in Table \ref{table1}. If $\tau>0.9$ the basic reproduction number $R_0<1$, and WNv will die out.} \label{fig5}
\end{figure}

\begin{figure}[H]
\centering
{\includegraphics[width=7cm]{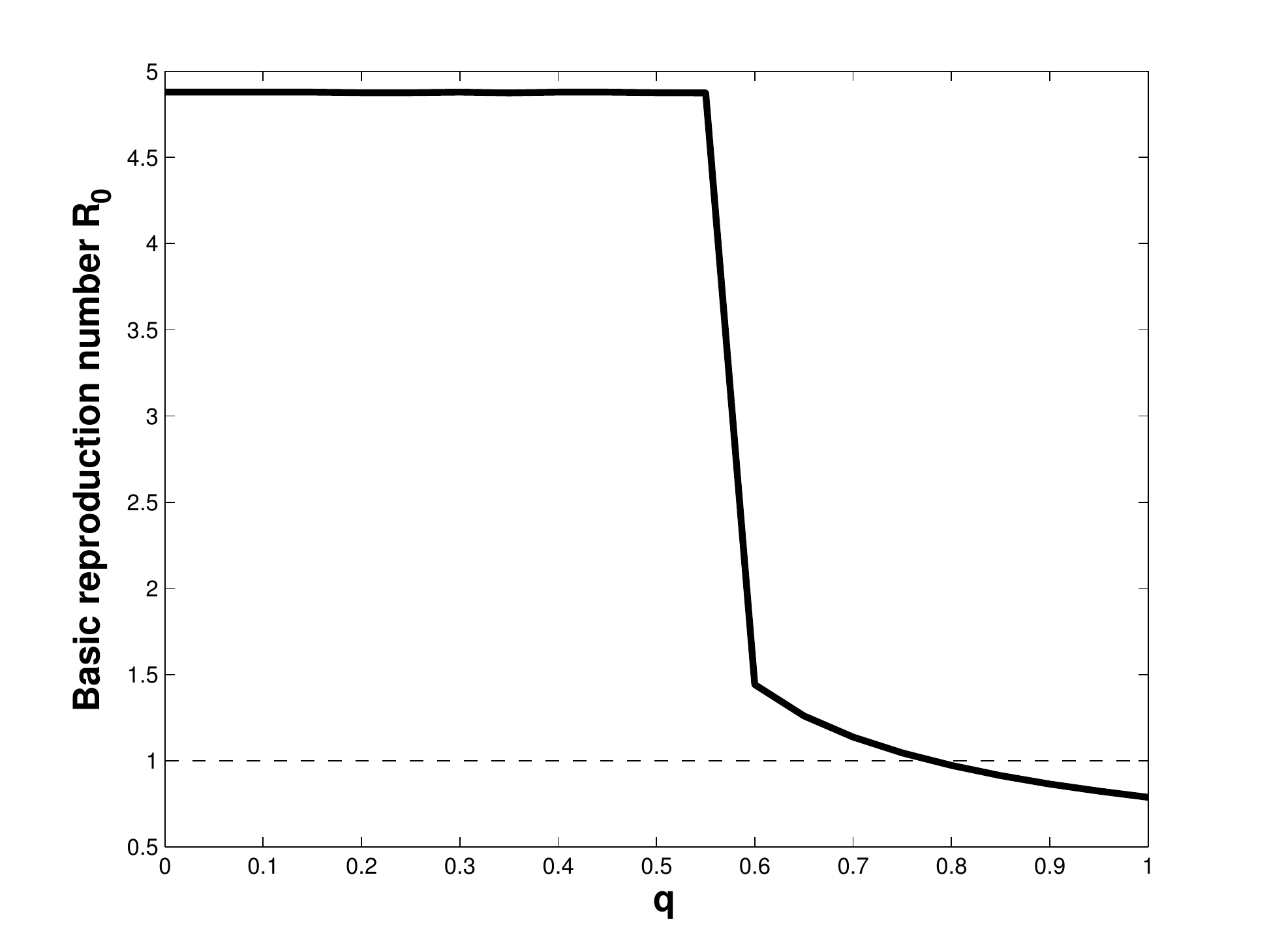}}
\caption{Basic reproduction number $R_0$, plotted against $q$, with $\beta = 0.1$, $\tau = 0.9$, $\gamma = 0$, $\varepsilon = 0.02$, and the other parameter values given in Table \ref{table1}. If $q>0.8$ the basic reproduction number $R_0<1$, and WNv will die out.} \label{fig6}
\end{figure}

\begin{figure}[H]
\centering
{\includegraphics[width=7cm]{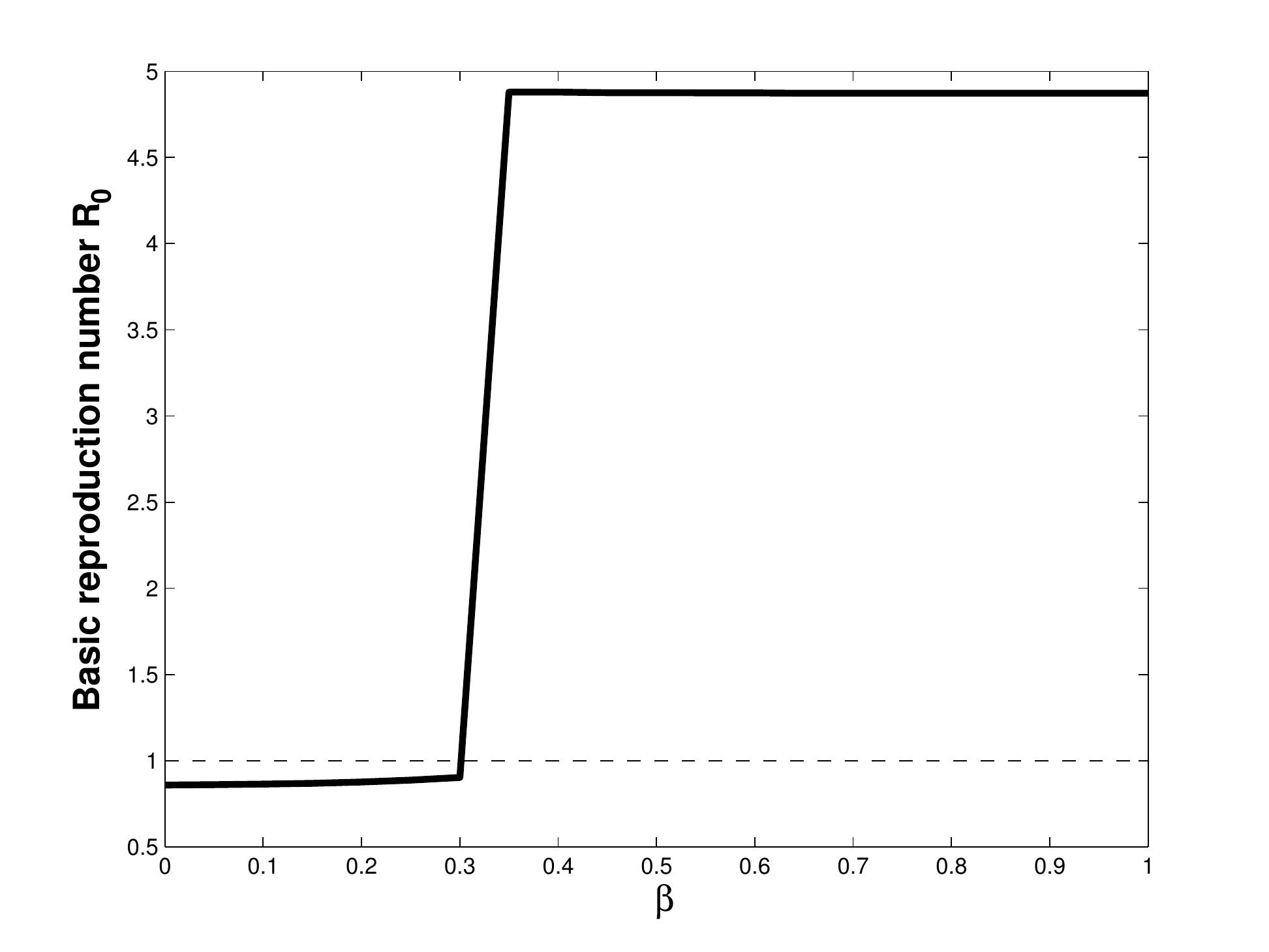}}
\caption{Basic reproduction number $R_0$, plotted against $\beta$, with $\tau = 0.9$, $q = 0.9$, $\gamma = 0$, $\varepsilon = 0.02$, and the other parameter values given in Table \ref{table1}. If $\beta<0.3$ the basic reproduction number $R_0<1$, and WNv will die out.} \label{fig7}
\end{figure}

\begin{figure}[H]
\centering
{\includegraphics[width=7cm]{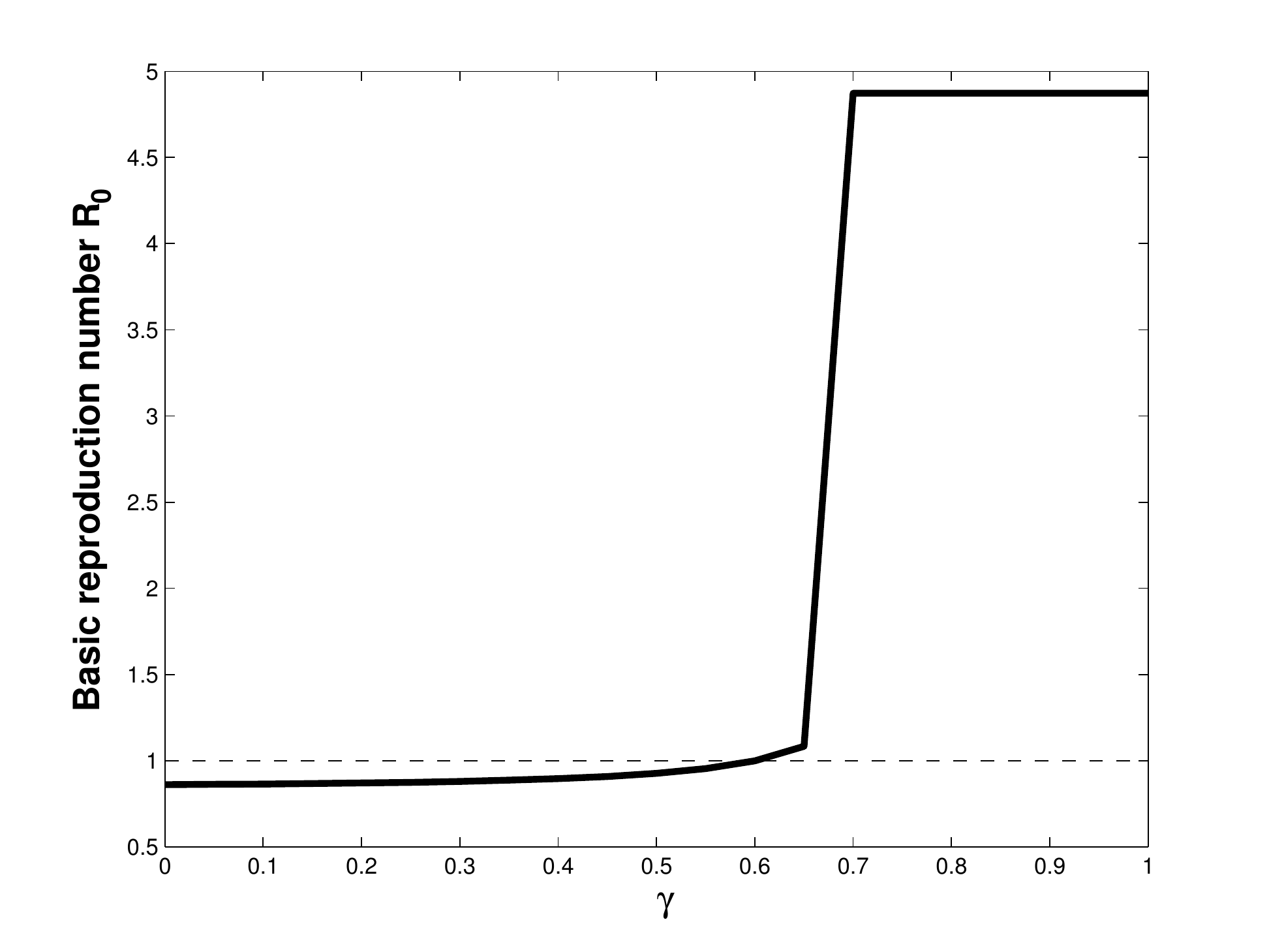}}
\caption{Basic reproduction number $R_0$, plotted against $\gamma$, with $\beta = 0.1$, $q = 0.9$, $\tau = 0.9$, $\varepsilon = 0.02$, and the other parameter values given in Table \ref{table1}. If $\gamma<0.65$ the basic reproduction number $R_0<1$, and WNv will die out.} \label{fig8}
\end{figure}

\section{Conclusion}
In this paper we have derived a detailed sex-structured model for a mosquito population infected with \Wolb. 
The model captures many of the well-known key effects of \Wolb infection, including cytoplasmic incompatibility, male killing, reduction in reproductive output and incomplete maternal transmission of the \Wolb infection. Our analysis shows that the mosquito population can stabilise at a \Wolb free equilibrium under certain circumstances, which include situations when inequality \eqref{A4} holds. Such circumstances include, for example, if \Wolb infection significantly reduces reproductive output, and/or \Wolb infection significantly lowers female life expectancy. 
We also showed that if $\tau=1$, i.e. maternal transmission of \Wolb is complete, then the mosquito population can stabilise at an equilibrium in which all mosquitoes are infected with \W. This happens in the case of sufficiently high cytoplasmic incompatibility. 
In the case of $\tau$ close to $1$ we have shown that \Wolb infected mosquitoes can coexist with small numbers of uninfected mosquitoes. 
We have also shown that under some additional assumptions our model has multiple coexistence steady states. 

We extended the sex-structured mosquito population model
\eqref{eqM}-\eqref{eqFw} to include West Nile virus, which is spread
by birds and mosquitoes, treating WNv as an SEI infection for
mosquitoes, and as an SEIR infection for birds. We were motivated by
results recently reported in \cite{Hussain2013}, which suggest that a
particular strain of \Wolb substantially reduces WNv replication in
the mosquito species {\it Aedes aegypti}. We modelled this crucial
phenomenon by incorporating a small parameter $\varepsilon$, the
reciprocal of which is proportional to the time spent in the WNv
exposed class for \Wolb infected mosquitoes. This enabled us to assess
the potential of \Wolb infection to eradicate WNv via its effect on
WNv replication in \Wolb infected mosquitoes. Notably the expression
we obtained for the basic reproduction number $R_0$ suggests that WNv
will be eradicated if at the steady state the overwhelming majority of
mosquitoes are infected with \W, and the \Wolb infection substantially
reduces WNv replication in mosquitoes. The first of these hypotheses
is in fact shown to hold for a number of \Wolb strains and mosquito
species, see e.g. \cite{Engelstadter2009}.

\section*{Acknowledgments}
We thank the American Institute of Mathematics for financial support
through the SQuaRE program. We also thank the reviewers for their helpful comments.

\end{document}